\newtheorem{definition}{Definition}
\newtheorem{theorem}{Theorem}
  \newcommand{\llangle}{\langle\!\langle}
  \newcommand{\rrangle}{\rangle\!\rangle}
\renewcommand{\d}{\mathrm{d}}
  \newcommand{\oo}{\infty}
\renewcommand{\O}{\mathcal{O}}
  \newcommand{\del}{\partial}
\newsavebox\sarrbox%
\savebox{\sarrbox}{\includegraphics[scale=.75]{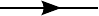}}%
\def\sarr{\usebox\sarrbox}%
\newsavebox\darrbox%
\savebox{\darrbox}{\includegraphics[scale=.75]{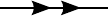}}%
\def\darr{\usebox\darrbox}%
\begin{document}

\title{Quantum astrometric observables I: time delay in classical and quantum gravity}

\author{Igor Khavkine}
\email{i.khavkine@uu.nl}
\affiliation{Institute for Theoretical Physics, Utrecht University\\
Leuvenlaan 4, NL--3584 CE Utrecht, The Netherlands}
\date{\today}

\begin{abstract}
A class of diffeomorphism invariant, physical observables, so-called
astrometric observables, is introduced. A particularly simple example, the time
delay, which expresses the difference between two initially synchronized
proper time clocks in relative inertial motion, is analyzed in detail.
It is found to satisfy some interesting inequalities related to the
causal structure of classical Lorentzian spacetimes. Thus it can serve
as a probe of causal structure and, in particular, of violations of
causality.  A quantum model of this observable as well as the
calculation of its variance due to vacuum fluctuations in quantum
linearized gravity are sketched. The question of whether the causal
inequalities are still satisfied by quantized gravity, which is
pertinent to the nature of causality in quantum gravity, is raised, but
it is shown that perturbative calculations cannot provide a definite
answer. Some potential applications of astrometric observables in
quantum gravity are discussed.
\end{abstract}
\pacs{04.20.-q, 04.20.Gz, 04.25.Nx, 04.60.-m, 04.60.Bc}

\maketitle

\section{Introduction}\label{intro}
The issue of physical observables in both classical and quantum theories
of gravity has been a topic of long standing interest both practically
and theoretically. Practically, precise models for tracking the positions
of objects on scales from Solar System to cosmological require input
from general relativity. Such models go under the generic name of
relativistic astrometry~\cite{soffel-astro}. Also, models of the very
early Universe rely on incorporating quantum gravitational effects in
order to predict potentially observable signatures in current
cosmological observations (Sec.~5.7 of~\cite{woodard-qg}). Theoretically,
the algebra of physical observables, with each observable a mathematical
model of an experimental outcome, is an integral part of a complete
classical or quantum theory of gravity~\cite{bergmann-obsv, rovelli-obsv, lp-obsv, dittrich-obsv, dittrich-tambo, pss-obsv}. In each of
these cases, one has to confront the problem that, while physical
observables are expected to be invariant under spacetime
diffeomorphisms, in the usual formulation of general relativity
everything is described in terms of tensor fields, which are covariant
but not \emph{invariant} under spacetime diffeomorphisms.

A resolution of the problem of physical observables would then involve
two parts. First, one has to explicitly describe a sufficiently large
class of spacetime diffeomorphism-invariant (or \emph{diff-invariant})
quantities expressed in terms of tensor fields. Second, one has to
identify elements of this class that correspond to outcomes of some
experiments of interest. The literature on this subject is
extensive~\cite{bergmann-obsv, rovelli-obsv, lp-obsv, dittrich-obsv, dittrich-tambo,
gmh-obsv, pss-obsv} (see also references therein), but no solution has
been entirely successful. To illustrate the difficulties, consider the
following two examples. A simple-to-describe class of diff-invariant
quantities consists of the spacetime integrals of the form
\begin{equation}
	\int_M f(g,\phi) \, v_g ,
\end{equation}
where $M$ is the entire spacetime, $f$ is some smooth spacetime scalar
defined only in terms of the metric $g$ and other dynamical fields $\phi$, and
$v_g$ is the metric volume form. Unfortunately, even ignoring the
issue of convergence of such integrals, this class of
diff-invariant quantities is not rich enough to describe the outcomes of
any experiments that we are likely to perform (since such experiments
would necessarily be localized in a finite region of spacetime). The
other example is more abstract. Consider (formally) the physical phase
space of general relativity defined as the quotient by spacetime
diffeomorphisms of the space of solutions of Einstein's equations.
Ostensibly, any function on this space is a diff-invariant quantity, and
hence a physical observable. Moreover, all physical observables are so
captured. However, due to the abstract nature of this construction, it
is not possible to assign a clear physical meaning to any element of
this class. There is, a priori, no effective way to specify an
individual element of this class or to carry out practical calculations
with it.

The aim of this work is to take a pragmatic approach to the explicit
construction of physical observables and apply it to more
theoretical problems like studying the causal structure of quantum
gravity. From a theoretical point of view, the abstract notion of a
physical observable, sketched in the previous paragraph, as a function
on the physical phase space is quite satisfactory. The main problem
remaining is to identify observables of interest and given them a
physical interpretation in terms of a modeled experimental outcome. A
natural way of addressing this difficulty, inspired by the methods used
in practical problems like relativistic astrometry, is to start with a
potential experiment in mind and construct a sufficiently detailed
mathematical model of it. Such a model should include sufficiently many
dynamical variables representing parts of the experimental apparatus
such that the desired measurement outcome can be modeled using the
relative configurations of these variables. The result is a mathematical
model of a measurement outcome, in other words a physical observable.
This observable, by virtue of its operational definition, should then be
a diff-invariant quantity and thus an element of the algebra of
functions on the physical phase space of the theory. Now though,
fortunately, since we started out by modeling an experiment, its
physical interpretation is clear.

A mathematical model of an experiment interacting with dynamical gravity
is likely to make reference to solutions of geodesic or wave equations
on unspecified (indeed dynamical) metric backgrounds. Coupled with the
large variety of experiments that could be imagined and modeled, a
remaining practical difficulty is that the resulting physical observable
is still specified only implicitly and may not be immediately amenable
to practical calculations. It appears that this problem can only be
overcome on a case by case basis. For the particular observable
considered in this work, the \emph{time delay}, this difficulty is overcome by
appealing to perturbation theory and providing explicit formulas, based
on an exact implicit definition, in terms of one-dimensional integrals
over linear metric perturbations about Minkowski space.

The idea of using operationally or ``relationally'' defined observables
in gravitational theories is not entirely novel. It has been previously
considered in~\cite{rovelli-rel, rovelli-qg, dittrich-obsv}.
Unfortunately, that work has remained at a rather abstract level and did
not make use of sufficiently realistic experimental models, thus keeping
the physical interpretation of the constructed observables somewhat
moot. The previous works that used ideas most similar to ours
are~\cite{ford-lightcone}, \cite{ohlmeyer}, and~\footnote{A.\ Roura and
D.\ Arteaga (private communication).}. Unfortunately, the original work
of~\cite{ford-lightcone} and its follow-ups~\cite{ford-top, ford-focus,
ford-angle}, while exhibiting a clear physical interpretation, left many
mathematical loose ends. In particular, the issues of diff-invariance
(or gauge invariance) and regularization were not treated entirely
satisfactorily, both of which are explicitly addressed in this work, see
Secs.~\ref{gauge-inv} and~\ref{dyn-appr}. Another work in a similar
spirit is~\cite{tsamis-woodard}, especially at the technical level,
though with a different physical motivation. On the other hand, the
original work~\cite{woodard-thesis} gives several different motivations
for the technical calculations, including an approach very similar to
that of this paper in terms of the construction of diff-invariant,
physically meaningful gravitational observables. At the technical level,
the main departure of this work from that of~\cite{woodard-thesis,
tsamis-woodard} is in the use of smeared observables to regularize
divergences appearing due to the use of geodesics of idealized,
point-like particles, see Sec.~\ref{fin-resol}.

In Sec.~\ref{op-def}, we operationally define the \emph{time delay} physical
observable (or rather a family of related observables). Section~\ref{class-model}
gives an exact, though implicit, mathematical model for this physical
observable in a theory of gravity coupled to a minimal amount of matter
modeling the experimental apparatus. Section~\ref{caus-ineq} contains an analysis of
why the time delay is an observable interesting for studying the causal
structure of gravity. In particular, two important inequalities are
derived directly from the Lorentzian character of the metric field.
Section~\ref{class-calc}, using technical results on the perturbative solution of
the geodesic and parallel transport equations presented in the \hyperref[pertsol]{Appendix},
gives an explicit formula for the time delay in linearized gravity.
Sections~\ref{quant-model} and~\ref{quant-calc} sketch how the time delay should be defined as a
quantum observable and how explicit calculations in linearized quantum
gravity can capture some aspects of causal structure of quantum gravity.
Due to the added complexity of quantum mechanics, these two sections are
naturally less detailed than the preceding ones. The issues discussed in
these sections will be addressed in more detail elsewhere.
Section~\ref{discussion} concludes with a discussion of the results and an outlook to
future work.

\section{Operational description and gauge invariance}\label{op-def}
The time delay observable is defined by the following experimental
protocol, Fig.~\ref{triang-geom-clock} (which is of course only an
idealization of a real experiment).  Consider a laboratory in inertial
motion (free fall). The laboratory carries a clock that measures the
proper time along its trajectory. The laboratory also carries an
orthogonal frame, which is parallel-transported along the lab's
worldline. (The frame could be Fermi-Walker-transported if the motion
were not inertial.) At a moment of the experimenter's choosing, the lab
ejects a probe in a predetermined direction, fixed with respect to the
lab's orthogonal frame and with a predetermined relative velocity. The
probe then continues to move inertially and carries its own proper time
clock. The two clocks are synchronized to $0$ at the ejection event $O$.
After ejection, the probe continuously broadcasts its own proper time
(time stamped signals), in all directions using an electromagnetic
signal (which hence travels at the speed of light). At a predetermined
proper time interval $s$ after ejection, event $Q$, the lab records the
probe signal and its emission time stamp $\tau(s)$, sent from event $P$.
Call $s$ the \emph{reception time}, $\tau(s)$ the \emph{emission time}
and the difference
\begin{equation}
	\delta\tau(s)=s-\tau(s)
\end{equation}
the \emph{time delay}.

\begin{figure}
\includegraphics{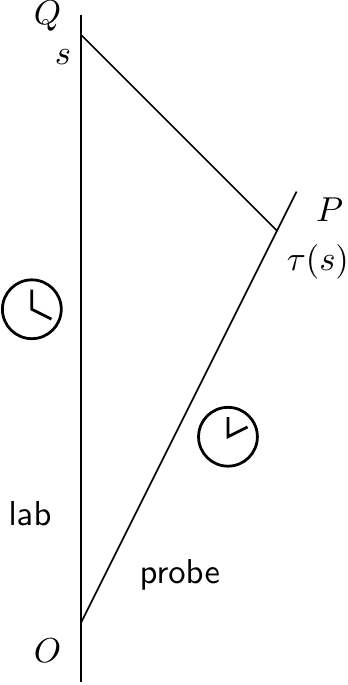}
\caption{%
Geometry of the experimental protocol defining the reception
time $s$, emission
time $\tau(s)$, and the time delay $\delta\tau(s)=s-\tau(s)$. The
synchronization/ejection point is $O$. The signal emission point is $P$
and the signal reception point is $Q$.
}
\label{triang-geom-clock}
\end{figure}

The time delay $\delta\tau(s)$ as well as the emission and reception
time are presumed to have been measured with negligible inaccuracy. Of
course, that is a severe idealization. For it to be reasonable, the
magnitude of $\delta\tau(s)$ must exceed the noise from the intrinsic
inaccuracies in the probe and lab instruments (clocks, gyroscopes,
ejection mechanism, transmission and recording uncertainties, etc.).
Given the smallness of both the classical and quantum contributions to
$\delta\tau(s)$ (which are suppressed by all of the following: magnitude
of light speed, smallness of spacetime curvature, and smallness of
$\hbar$), is unlikely to be reasonable in our own Universe, at least for
na\"ive ways to realize this experimental setup.

However, there is no a priori reason for not being able to perform such
measurements successfully with (significantly) more clever or improved
experimental techniques, or in a universe with different values of some
of the fundamental constants. A successful theory of quantum gravity
should be able to yield quantitative predictions (for any universe) for
this and related observables. Some set of these observables may actually
be practically measurable in our own Universe. As such, the time delay,
by virtue of its simplicity and ease of physical interpretation, serves
as a useful benchmark for dealing with whatever practical difficulties
are likely come up in calculations involving similar, but perhaps more
realistic, observables.

\section{Classical mathematical model}\label{class-model}
A significantly idealized mathematical model of the experiment described
in the preceding section, in the classical theory, consists of the
geometrical objects collected in the following definition.

\begin{definition}
	A \emph{lab-equipped} spacetime $(M,g,O,e^a_i)$ consists of an
	oriented, Lorentzian, time-oriented, globally hyperbolic,
	$n$-dimensional spacetime $(M,g)$, a point $O\in M$ and an oriented
	orthonormal frame $e^a_i\in T_OM$, with $a$ an abstract tensor index
	and $i=0,1,\ldots,n$, where $e^a_0$ is timelike and future-directed.
\end{definition}

See Sec.~2.4 of~\cite{wald} for the distinction between abstract and coordinate
tensor indices.
Physically, the point $O$ represents the spacetime event when the probe
is ejected from the lab. The vector $e^a_0$ is tangent to the lab's
worldline and $e^a_i$, $i=1,2,\ldots,n$ is the oriented spatial frame
carried by the lab. Note that the restrictions on the Lorentzian
geometry of $(M,g)$ may not all be necessary. Also, in this work we only
consider the case $n=4$.

Of course, due to the background independence of gravitational physics,
the measurements carried out in spacetimes related by diffeomorphisms
(i.e.,\ gauge transformations) must be identical. It is thus useful to
introduce the following notion of equivalence.

\begin{definition}
	Two lab-equipped spacetimes $(M,g,O,e^a_i)$ and $(M',g',O',e^{\prime
	a}_i)$ are \emph{gauge equivalent} if there exists a diffeomorphism
	$\chi\colon M\to M'$ such that $\chi_*g = g'$, $\chi(O) = O'$ and
	$\chi_* e^a_i = e^{\prime a}_i$, where $\chi_*$ denotes the
	differential push-forward.
\end{definition}

An observable is modeled mathematically by a function on the space of
lab-equipped spacetimes. The time delay observable is defined by
implementing the protocol outlined in the preceding section. First, we
need the following further definitions.
\begin{definition}
	The \emph{lab worldline}, $Q(s)=\exp_O(su)$, is the geodesic passing
	through $O$ with tangent vector $u^a=e^a_0$. Here $\exp_O\colon
	T_OM\to M$ is the usual geodesic exponential map.

	The \emph{probe worldline}, $P(t)=\exp_O(tv)$, is the geodesic
	passing through $O$ with tangent vector $v^a = v^i e^a_i$, with
	$v^i\in\mathbb{R}^{1,3}$ a timelike, future-directed, unit vector,
	chosen independent of the spacetime geometry.

	The \emph{signal worldline}, $Z(t,\lambda)$, is the null geodesic
	emanating from a point on the probe worldline, $Z(t,1)=P(t)$, and
	intersecting the lab worldline, $Z(t,0)=Q(s)$, with the earliest
	possible $s$ (alternatively, if $s$ is fixed, then $t$ is chosen to be
	the latest possible).
\end{definition}
The values of $t$ and $s$ connected by $Z$ are functionally related. This
relationship defines the time delay observable.
\begin{definition}
	If $t$ and $s$ are such that $P(t)$ and $Q(s)$ are connected by
	$Z(t,\lambda)$, they are referred to as a pair of \emph{emission} and
	\emph{reception} times. The functional relationship between them is
	denoted
	\begin{equation}
		t = \tau_v(s),
	\end{equation}
	where $\tau_v(s)$ is called the \emph{recorded emission time} and the
	difference
	\begin{equation}
		\delta\tau_v(s) = s - \tau_v(s),
	\end{equation}
	is called the \emph{time delay}.
\end{definition}

By construction, the following theorem holds.
\begin{theorem}\label{thm-gauge-inv}
	Given two gauge-equivalent lab-equipped spacetimes $(M,g,O,e^a_i)$ and
	$(M',g',O',e^{\prime a}_i)$, the corresponding time delays (keeping
	$s$ and $v^i$ fixed) are equal:
	\begin{equation}
		\delta\tau_v(s) = \delta\tau'_v(s).
	\end{equation}
\end{theorem}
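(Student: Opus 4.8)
The plan is to show that every geometric object entering the definition of the time delay is natural under diffeomorphisms, so that the pushforward by $\chi$ carries the entire construction on $(M,g,O,e^a_i)$ to the corresponding construction on $(M',g',O',e^{\prime a}_i)$, whence the two time delays must coincide. Since the theorem says ``by construction,'' the proof is essentially a verification that each ingredient---the exponential map, geodesics, null geodesics, and the earliest-intersection condition---commutes with $\chi_*$. The key fact I would invoke is that the geodesic equation, the null condition $g(\dot Z,\dot Z)=0$, and parallel transport are all defined purely in terms of $g$, so a diffeomorphism mapping $g$ to $g'$ must map their solutions to solutions.

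First I would record that, by the definition of gauge equivalence, $\chi$ satisfies $\chi_*g=g'$, $\chi(O)=O'$, and $\chi_*e^a_i=e^{\prime a}_i$. Because the Levi-Civita connection is built naturally from the metric, $\chi$ intertwines the connections of $g$ and $g'$; consequently $\chi\circ\exp_O=\exp'_{O'}\circ\,\chi_*$ as maps $T_OM\to M'$. Applying this to the tangent vectors that define the worldlines gives immediately $\chi(Q(s))=Q'(s)$ for the lab worldline with tangent $e^a_0$, and $\chi(P(t))=P'(t)$ for the probe worldline with tangent $v^a=v^ie^a_i=v^i e^{\prime a}_i$ (pushed forward), noting that the numerical components $v^i$ are held fixed and are chosen independently of the geometry. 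Thus the two worldlines on $M'$ are exactly the $\chi$-images of those on $M$.

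Next I would handle the signal worldline. Since null geodesics are determined by the connection together with the null condition on the initial tangent, and $\chi_*$ preserves both, $\chi$ maps the family $Z(t,\lambda)$ of null geodesics on $M$ to the corresponding family $Z'(t,\lambda)$ on $M'$ connecting $P'(t)$ to $Q'(s)$. The remaining point is that the selection rule---choose the earliest reception $s$ (equivalently the latest emission $t$)---is preserved: because $\chi$ is a diffeomorphism that preserves the time orientation and the causal structure (being an isometry onto $(M',g')$), it maps the earliest intersection to the earliest intersection. Therefore the functional relation $t=\tau_v(s)$ is identical on both spacetimes, and since $s$ and $v^i$ are held fixed, $\delta\tau_v(s)=s-\tau_v(s)=s-\tau'_v(s)=\delta\tau'_v(s)$.

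The only genuinely delicate step is the preservation of the earliest-intersection selection rule, since the other steps are routine naturality statements. The subtlety is that one must confirm $\chi$ respects not merely the set of null geodesics but the ordering of reception events along the lab worldline; this follows from $\chi$ being a time-orientation-preserving isometry, which maps future-directed causal curves to future-directed causal curves and hence preserves the parametrization of $Q(s)$ and the relative ordering of candidate intersection points. Once this is granted, the equality of the time delays is immediate, and no further computation is required.
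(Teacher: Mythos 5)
Your proof is correct and is essentially the argument the paper itself appeals to: the paper gives no detailed proof, asserting the theorem ``by construction,'' and your verification that the exponential map, the geodesic worldlines, the null-geodesic signal, and the earliest-intersection rule are all natural under the (time-orientation-preserving, since $\chi_*e^a_0=e^{\prime a}_0$ is future-directed) isometry $\chi$ is exactly the content of that remark. No discrepancy with the paper's approach.
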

In other words, the time delay (as well as as any function thereof, such
as the recorded emission time) constitutes a genuine (diffeomorphism-invariant)
physical observable on lab-equipped spacetimes. When the
context is clear, we will omit the explicit dependence of
$\delta\tau_v(s)$ on $v$ or $s$.

\section{Causal inequalities}\label{caus-ineq}
The time delay is interesting in more ways than just being an explicit
example of a physical observable sensitive to the ambient gravitational
field. The
experimental protocol defining it can be thought of as designed to test
the impossibility of superluminal signal propagation and the geodesic
character of the lab worldline. In particular, under quite generic
assumptions related to the Lorentzian character of spacetime geometry,
the time delay satisfies inequalities that would be violated if the
above-mentioned tests were to fail. A careful examination of a
mathematical model of such tests for classical spacetimes can serve as a
benchmark to understand possible outcomes of such tests in any proposed
theory of quantum spacetime.

\subsection{Maximality of light speed}
In curved spacetimes, it is impossible to objectively compare the speed
of light at different spacetime events. Due to general covariance, any
experiment to measure the \emph{local} speed of light, calibrated at
spacetime event $x$ to return the value $1$, will return the same value
$1$ at any other spacetime event $y$, provided it was parallel-transported
there. Therefore any such experiment must perform
measurements in finite regions of spacetime.

In the case of the time delay experiment, since light is used to send
the signals, the local speed of signal propagation cannot by definition
exceed that of light. However, to take into account possible nontrivial
global geometry, we adopt the following definition of (apparent) superluminal signal
propagation. Consider a pair of emission-reception times $(\tau,s)$. If
there exists another pair $(\tau',s')$ such that $\tau'<\tau$ (signal emitted
later than $t'$) and $s'>s$ (signal arrived earlier than $s'$), then the
later signal must have travelled \emph{superluminally}, cf.~Fig.~\ref{triang-geom-caus}. In classical
Lorentzian geometries without closed causal curves 
we can prove that this never happens.

\begin{figure}
\includegraphics{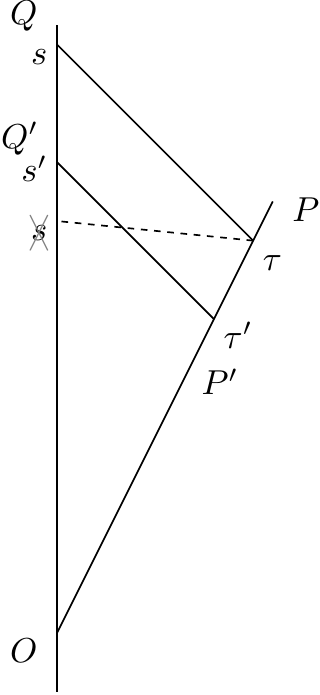}
\caption{%
Illustration of the conclusion of Theorem~\ref{light-bound}.
Successive emission times $\tau < \tau$ imply successive
reception times $s < s'$. The dashed line represents a case, ruled out
by the theorem, where the signal might appear superluminal.
}
\label{triang-geom-caus}
\end{figure}

\begin{theorem}\label{light-bound}
	In a lab-equipped spacetime $(M,g,O,e^a_i)$ we have the following
	implication between inequalities satisfied by pairs $(\tau(s),s)$ and
	$(\tau(s'),s')$ of emission-reception times:
	\begin{equation}
		\tau(s') < \tau(s) ~ \implies ~ s' < s .
	\end{equation}
	In particular, when $\tau(s)$ is smooth, we have $\frac{\d}{\d
	s}\tau(s) > 0$.
\end{theorem}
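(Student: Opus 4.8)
The plan is to translate the operational definition of the emission--reception relation into causal-theoretic language and then run a short causality argument. First I would record the consequences of the hypotheses on a lab-equipped spacetime, in particular global hyperbolicity, so that the standard causality lemmas apply and $J^{+}$ is closed. Since the lab worldline $Q(s)=\exp_O(su)$ and the probe worldline $P(t)=\exp_O(tv)$ are future-directed timelike geodesics issuing from $O$, their parameters order points chronologically: $\tau<\tau'$ gives $P(\tau)\ll P(\tau')$ and $s<s'$ gives $Q(s)\ll Q(s')$. The crucial step is to extract from the ``earliest $s$'' (equivalently ``latest $t$'') clause in the definition of the signal worldline that the reception event lies on the light-cone boundary of the emission event, $Q(s)\in\partial J^{+}(P(\tau(s)))$, so that $P(\tau(s))\not\ll Q(s)$: the pair is causally but not chronologically related, being joined by a null geodesic. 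Indeed, were $Q(s)$ in the open set $I^{+}(P(\tau(s)))$, slightly earlier points of the lab worldline would still lie in $J^{+}(P(\tau(s)))$, contradicting minimality of $s$.

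With this in hand the implication follows by contradiction. Suppose $\tau(s')<\tau(s)$ but $s'\ge s$. The first inequality gives $P(\tau(s'))\ll P(\tau(s))$ on the probe worldline; reception of the signal emitted at $P(\tau(s))$ gives $P(\tau(s))\prec Q(s)$; and $s'\ge s$ gives $Q(s)\preceq Q(s')$ on the lab worldline. Concatenating these and invoking the push-up lemma (a chronological relation followed by a causal one is chronological), I obtain $P(\tau(s'))\ll Q(s')$. This contradicts the boundary property $P(\tau(s'))\not\ll Q(s')$ established above, so $s'<s$, which is exactly the claimed implication.

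For the differential statement I would first upgrade monotonicity to strictness. The implication just proved shows $\tau$ is non-decreasing in $s$; to exclude $\tau(s)=\tau(s')$ with $s<s'$, note that this would place the chronologically related points $Q(s)\ll Q(s')$ simultaneously on the achronal boundary $\partial J^{+}(P(\tau(s)))$ of a single emission event, which is impossible. Hence $s\mapsto\tau(s)$ is strictly increasing, and wherever it is smooth $\frac{\d}{\d s}\tau(s)\ge 0$. To promote this to a strict inequality I would regard the emission--reception relation as the zero set of the smooth function detecting null separation of $P(t)$ and $Q(s)$; since null directions are transverse to the timelike worldlines, the implicit function theorem realizes $\tau_v$ as a local diffeomorphism, so the non-negative derivative cannot vanish.

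I expect the main obstacle to be the linchpin of the first paragraph: faithfully converting the operationally phrased selection rule into the clean statement $P(\tau(s))\not\ll Q(s)$, including edge cases in which the lab worldline might re-enter the causal future of the emission event (precisely what the global hyperbolicity and no-closed-causal-curve assumptions are there to control). The remaining causality inputs, namely the push-up lemma and the achronality of $\partial J^{+}$, are standard; the only other delicate point is the transversality step needed for strict positivity of the derivative, as opposed to the weak monotonicity that the inequality alone supplies.
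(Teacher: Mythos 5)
Your proposal is correct and follows essentially the same route as the paper's proof: both arguments hinge on extracting from the ``earliest reception'' selection rule that the reception event lies on $\del J^+$ of its emission event (hence is causally but not chronologically related to it), and then combining this with the push-up property of chronological and causal relations along the chain $P(\tau(s'))\ll P(\tau(s))\prec Q(s)\preceq Q(s')$ --- you organize it as a contradiction where the paper argues directly that a timelike curve from the boundary to the interior of $J^+(P')$ must be future-directed, but the causal-theoretic content is identical. Your additional treatment of the strict positivity of $\frac{\d}{\d s}\tau(s)$ (achronality of $\del J^+$ for strict monotonicity, plus a transversality/implicit-function argument to exclude a vanishing derivative) actually fills a step the paper asserts without proof, since the implication alone only yields $\frac{\d}{\d s}\tau(s)\ge 0$.
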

Essentially, this theorem says that a signal that is emitted later,
with respect to the probe, also arrives later, with respect to the lab.
\begin{proof}
Let the two signals be emitted from chronologically successive points
$P'$ and $P$, and received at points $Q'$ and $Q$, respectively. Since
$P'$ and $P$ are part of the same worldline, $P$ clearly belongs to the
set of all points that can be reached from $P'$ by future-directed
timelike curves, $P\in I^+(P')$.  The points $Q$ and $Q'$ are also
connected by a timelike curve, though we do not assume in which
precedence order, therefore $Q$ must belong to either $I^+(Q')$ or
$I^-(Q')$.  At the same time, by the definition of $Q$, it can be
reached by a piecewise smooth, non-spacelike, future-directed curve
$P'PQ$. Since $P'PQ$ is obviously not a null geodesic, Prop.~4.5.10
of~\cite{hawking-ellis} implies that $Q$ and $P'$ can be joined by a
(future-directed) timelike curve, $Q\in I^+(P')\subset
\operatorname{int} J^+(P')$.
By definition, $Q'$ is reached from $P'$ by a future-directed null
geodesic, such that there is no later point $P''$ on the probe
worldline with the same property. This implies that $Q'\in \del
J^+(P')$. Otherwise, $Q'\in I^+(P')$, hence $P'\in I^-(Q')$, hence any
point $P''\in \del J^-(Q')$ that is also on the probe worldline
violates the preceding hypothesis.
But all the timelike curves from $\del J^+(P')$ to $\operatorname{int}
J^+(P')\ni Q$ can only be future directed,
of which the one reaching $Q$ from $Q'$ is a special case, hence $Q\in
I^+(Q')$. This shows that $Q'$ chronologically precedes $Q$ or $s' < s$.
\end{proof}

\subsection{Geodesic extremality}
The twin ``paradox'' is a well-known phenomenon in special relativity:
the proper time between two timelike separated events is maximized by a
straight line (inertial motion). Its generalization to curved spacetime
is generally true only locally: a timelike geodesic maximizes proper
time among causal curves close to it (provided it has no conjugate
points). 
Under some conditions on the spacetime or under some extra restrictions
on the class of allowed causal curves, geodesic extremality can also
hold globally. This includes the special geometry of the time delay
experiment. As we shall see below, the time delay observable is also
sensitive to some violations of the geodesic extremality. Such
violations mimic a breakdown of the equivalence principle (objects no
longer fall on geodesics in the absence of external forces).

\begin{figure}
\includegraphics{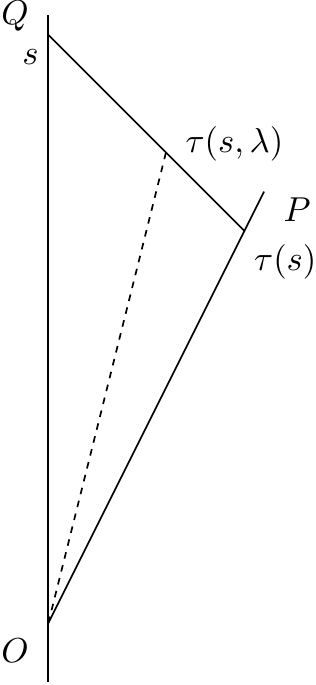}
\caption{%
Illustration of the proof of Theorem~\ref{twin-bound}. The auxiliary dashed
curve interpolates between $OQ$ and $OP$, as $\lambda$ varies from $0$
to $1$. As it does so, its proper time length $T(\lambda)$ is shown to
decrease monotonically, thus implying $\tau(s) < s$.
}
\label{triang-geom-twin}
\end{figure}

\begin{theorem}\label{twin-bound} 
	In a lab-equipped spacetime $(M,g,O,e^a_i)$ (where the lab and probe
	worldlines are smoothly deformable into each other in a sense to
	be precised in the proof)
	a pair $(\tau(s),s)$ of emission-reception times satisfies the
	inequality
	\begin{equation}
		\tau(s) < s \quad \text{or equivalently} \quad \delta\tau(s) < 0 .
	\end{equation}
\end{theorem}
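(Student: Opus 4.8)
The plan is to exhibit the lab and probe worldline segments $OQ$ and $OP$ as the two extreme members of a smooth one-parameter family of timelike geodesics issuing from $O$, and to prove that the proper-time length along this family strictly decreases. Let $R(\lambda)=Z(\tau(s),\lambda)$ be the point at parameter $\lambda$ on the signal null geodesic, so that $R(0)=Q(s)$ and $R(1)=P(\tau(s))$ (cf.\ Fig.~\ref{triang-geom-twin}). Since $P(\tau(s))\in I^+(O)$ and any point reachable from a point of $I^+(O)$ by a future-directed causal curve again lies in $I^+(O)$, the whole arc satisfies $R(\lambda)\in I^+(O)$. The precise content of the phrase ``smoothly deformable'' is then the hypothesis that this arc lies in a neighbourhood of $O$ on which $\exp_O$ is a diffeomorphism onto its image with future-timelike preimage; equivalently, that each $R(\lambda)$ is joined to $O$ by a unique timelike geodesic $\gamma_\lambda=\exp_O(\mu\,w(\lambda))$, $\mu\in[0,1]$, depending smoothly on $\lambda$, with $\gamma_0$ the lab worldline and $\gamma_1$ the probe worldline.

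Write $T(\lambda)$ for the proper-time length of $\gamma_\lambda$. By construction $T(0)=s$, the length of the lab geodesic reaching $Q(s)=\exp_O(su)$ with unit tangent $u^a=e^a_0$, and $T(1)=\tau(s)$, the length of the probe geodesic reaching $P(\tau(s))$. Hence the asserted inequality $\tau(s)<s$ is equivalent to the monotonicity $T(1)<T(0)$, and it suffices to show $\d T/\d\lambda<0$.

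This follows from the first variation of arc length with a moving endpoint. Denote by $\hat T^a_\lambda$ the future-directed unit tangent of $\gamma_\lambda$ at its endpoint $R(\lambda)$ and by $W^a=\del_\lambda R^a(\lambda)$ the endpoint velocity. Because each $\gamma_\lambda$ is a geodesic with fixed initial point $O$, all interior and initial-point contributions drop out and
\begin{equation}
	\frac{\d T}{\d\lambda}=-\g_{ab}\,\hat T^a_\lambda\,W^b .
\end{equation}
Now $W^a$ is tangent to the signal null geodesic and points from $Q$ toward $P$, hence into the past, so it is past-directed null, while $\hat T^a_\lambda$ is future-directed timelike. In Lorentzian signature the inner product of a future-directed timelike vector with a past-directed null vector is strictly positive, so $\d T/\d\lambda<0$. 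Integrating over $\lambda\in[0,1]$ gives $\tau(s)=T(1)<T(0)=s$, as claimed.

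The only real obstacle is the global input encoded in the deformation hypothesis: guaranteeing that the connecting geodesics $\gamma_\lambda$ exist, remain timelike, and vary smoothly all the way from the lab to the probe worldline. Near $O$ this is automatic, and it holds globally whenever the causal diamond determined by $O$, $P$ and $Q$ sits inside a geodesically convex normal neighbourhood of $O$ (no conjugate points, $\exp_O$ a diffeomorphism); in more general geometries it must be imposed, which is exactly the hypothesis stated in the theorem. With that granted, the first-variation identity and the sign of the timelike--null pairing are routine, and no further curvature assumptions enter.
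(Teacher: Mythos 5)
Your proposal is correct and follows essentially the same route as the paper: both parametrize the points $R(\lambda)$ along the null signal geodesic from $Q$ to $P$, connect each to $O$ by a timelike geodesic from a smooth family (this being exactly the deformability hypothesis), and apply the first variation of proper time with a moving endpoint, concluding $\d T/\d\lambda = -g_{ab}\hat{T}^a_\lambda W^b < 0$ because a future-directed unit timelike vector paired with the past-directed null endpoint velocity has strictly positive inner product. The only cosmetic differences are that the paper phrases the family as piecewise-geodesic curves $OP_\lambda Q$ (noting the null tail contributes zero proper time) and derives the first-variation formula explicitly following Hawking--Ellis, whereas you invoke it as a standard fact.
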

\begin{proof}
The basic strategy of the proof is to construct a one-parameter family of
piecewise geodesic curves that interpolate between the lab worldline
$OQ$ and the probe-signal worldline $OPQ$, while their proper time
lengths decrease monotonically, cf.~Fig.~\ref{triang-geom-twin}. The
existence of the specific interpolation constructed below is the extra
technical hypothesis alluded to in the statement of the theorem.

Suppose that the $PQ$ geodesic is affinely parametrized as $Z(\lambda)$,
where $Z(0)=Q$ and $Z(1)=P$. Denote also $P_\lambda=Z(\lambda)$.  Then
the family $OP_\lambda Q$ clearly interpolates between $OQ$ and $OPQ$,
where $OP_\lambda$ is a timelike geodesic connecting these points and
$P_\lambda Q$ is the segment $Z([0,\lambda])$. Since the $P_\lambda Q$
segment is null, only the $OP_\lambda$ segment contributes to the proper
time $T(\lambda)$ along $OP_\lambda Q$. Since $T(0)=s$ and
$T(1)=\tau(s)$, the proof is concluded as soon as we show that
$\frac{\d}{\d\lambda} T(\lambda)<0$, which we do below.

We adapt the calculation of the first variation of the proper time
length of a piecewise geodesic curve from Prop.~4.5.4
of~\cite{hawking-ellis}. Let $Y(t,\lambda)$ denote the geodesic family
$OP_\lambda$, parametrized such that $Y(0,\lambda)=O$ and $Y(1,\lambda)
= Z(\lambda)$. $Y(t,\lambda)$ is assumed to be smooth in both arguments
by the smooth deformability hypothesis of the theorem.  Denote
$\dot{Y}^a = \frac{\del}{\del t} Y(t,\lambda)$, $Y^{\prime a} =
\frac{\del}{\del\lambda} Y(t,\lambda)$ and $f = [-g_{ab}(Y(t,\lambda))
\dot{Y}^a \dot{Y}^b]^{1/2}$. Also, for the purposes of the calculation
below, pick a coordinate chart $x^\alpha$ and replace the Latin abstract
tensor indices by Greek coordinate indices.

\begin{align}
	f' &= -f^{-1} \left[ g_{\alpha\beta} \dot{Y}^{\prime\alpha} \dot{Y}^\beta
			+ \frac{1}{2} g_{\alpha\beta,\gamma} Y^{\prime\gamma}
			\dot{Y}^\alpha \dot{Y}^\beta \right] \\
		&= -\dot{Y}^{\prime\gamma} g_{\gamma\beta} \frac{\dot{Y}^\beta}{f}
			- Y^{\prime\gamma} \frac{1}{2} g_{\alpha\beta,\gamma}
				\frac{\dot{Y}^\alpha\dot{Y}^\beta}{f} \\
		&= -\frac{\del}{\del t} \left( g_{\alpha\beta} Y^{\prime\alpha}
					\frac{\dot{Y}^\beta}{f} \right) \\
\notag
		&\quad {}
			+ f Y^{\prime\gamma} \left[
				\frac{1}{f} \frac{\del}{\del t}
					\left(g_{\gamma\beta}\frac{\dot{Y}^\beta}{f}\right)
				- \frac{1}{2}g_{\alpha\beta,\gamma} \frac{\dot{Y}^\alpha
					\dot{Y}^\beta}{f^2}
			\right] \\
		&= -\frac{\del}{\del t} \left( g_{ab} Y^{\prime a}
					\frac{\dot{Y}^b}{f} \right)  .
\end{align}
Note that the bracketed term vanished because it is precisely the
geodesic condition (Eq.~87.3a of~\cite{landlif2}) and $Y(t,\lambda)$ is a geodesic for fixed
$\lambda$. At $t=0$, we have $Y^{\prime a} = 0$, while at $t=1$, we have
$Y^{\prime a} = Z^{\prime a}$, which is a past-directed null vector.
\begin{align}
	T'(\lambda)
		&= \frac{\del}{\del\lambda} \int_0^1\d{t}\,[-g_{ab}\dot{Y}^a\dot{Y}^b]^{1/2}
			= \int_0^1 \d{t}\,f' \\
		&= -\int_0^1 \d{t}\, \frac{\del}{\del t}
			\left( g_{ab} Y^{\prime a} \frac{\dot{Y}^b}{f} \right) \\
		&= - g_{ab} Z^{\prime a} \frac{\dot{Y}^b}{f} < 0  .
\end{align}
The latter inequality follows because $\dot{Y}^a/f$ is a future-directed
timelike unit vector and $Z^{\prime a}$ is a past-directed null vector,
hence their inner product is positive. Armed with this inequality, it
immediately follows that
\begin{equation}
	s-\tau(s) = T(0)-T(1) = -\int_0^1\d\lambda\,T'(\lambda) > 0,
\end{equation}
which completes the proof.
\end{proof}

\section{Explicit calculation in classical linearized gravity}\label{class-calc}
The time delay observable, while well-defined from its description in
the preceding sections, has so far been defined only implicitly.
Unfortunately, it would be very difficult to obtain an explicit
expression for it, except in highly symmetric spacetimes, where
the required geodesics can be computed explicitly. In particular, in
Minkowski space, as is done below, it can be computed by elementary
means. Fortunately, for small perturbations of Minkowski space, an
explicit expression for the time delay can be found at linear order.
Such an expression would be especially needed for the calculation of
quantum averages and fluctuations, as sketched in Sec.~\ref{quant-calc}.

The calculations are carried out in the tetrad formalism. While
linearized gravity calculations are usually carried out in the more
familiar metric variables, there are a few reasons to consider tetrads.
Using tetrads opens the door to a kind of improved perturbation theory,
where the metric keeps its Lorentzian signature at every step of the
approximation. This line of investigation, as briefly brought up in
Sec.~\ref{discussion}, will be pursued elsewhere.  Another advantage of
tetrads is that they are needed in the standard way of formulating
fermions on curved spacetime.

First, we explicitly compute the time delay in Min\-kow\-ski space and check
the causal inequalities. Then, using the results of the perturbative
solution of the geodesic and parallel transport equations of the
\hyperref[pertsol]{Appendix}, we compute the explicit expression for the time
delay at linear order in the deviation from Minkowski space.

\subsection{Minkowski space}
Consider Minkowski space $(M=\mathbb{R}^4,\eta,0,\hat{x}^a_i)$, with
$\eta=\mathrm{diag}(-1,1,1,1)$, as a lab-equipped spacetime. Without
loss of generality, we can take an arbitrary inertial coordinates $x^i$
on $(M,\eta)$ and use their origin $0$ as the synchronization point and
the vectors $\hat{x}^a_i = (\del/\del x^i)^a$ as the reference tetrad.
The dual tetrad is $\hat{x}^i_a = (\d x^i)_a$ and satisfies the
identities $\hat{x}^a_i \hat{x}_a^j = \delta_i^j$ and $\hat{x}^a_i
\hat{x}_b^i = \delta^b_a$. The Minkowski metric is $\eta_{ab} =
\eta_{ij} \hat{x}^i_a \hat{x}^j_b$.

The lab and probe worldlines are parametrized, respectively, as $x^i(s)=s
u^i$ and $x^i(t) = t v^i$, $i=0,1,2,3$, where $u^i=(1,0,0,0)$. Suppose that
the relative speed of the two timelike vectors $u^a = u^i \hat{x}^a_i$
and $v^a = v^i \hat{x}^a_i$ is given by the positive hyperbolic rapidity
$\theta$, $v_\mathrm{rel} = \tanh\theta$, then we have $u\cdot v =
\eta_{ab} u^a v^b = \eta_{ij} u^i v^j = -\cosh\theta$. The values
of $s$ and $t$ which may be connected by light signals are constrained
by
\begin{align}
	\eta_{ij} (s u^i - t v^i) (s u^j - t v^j) &= 0 \\
	-s^2 - 2 st (u\cdot v) - t^2 &= 0 \\
	s^2 - 2 st \cosh\theta + t^2 &= 0 \\
	(s e^\theta - t)(se^{-\theta}-t) &= 0 .
\end{align}
The retarded solution is then
\begin{equation}\label{emmtime-mink}
	\tau_\mathrm{cl}(s) = t = s e^{-\theta} .
\end{equation}
The subscript stands for ``classical,'' as it will serve in Sec.~\ref{quant-calc}
as the classical background expectation for quantum fluctuations.  This
expression clearly satisfies the causal inequalities obtained in the
previous section:
\begin{align}
	\tau_\mathrm{cl}(s) &= s e^{-\theta} < s, \\
	\frac{\d}{\d s} \tau_\mathrm{cl}(s) &= e^{-\theta} > 0.
\end{align}
Note that $\theta > 0$ since the probe is moving away from the lab.
The null vector connecting the emission and absorption points is
\begin{equation}
	w^i = su^i - tv^i = s (u^i - e^{-\theta} v^i).
\end{equation}
Another useful identity is
\begin{equation}
	v^i = e^\theta(u^i-w^i/s).
\end{equation}

\subsection{Approximately Minkowski space}
\subsubsection{Tetrad formalism}\label{tetform}
Consider another lab-equipped spacetime $(M,g,0,\hat{e}^a_i)$, where
we have kept the same underlying manifold $M$ and synchronization point
$O=0$ as in Minkowski space. On the other hand, we express the new
metric as $g_{ab} = \eta_{ij} e^i_a e^j_b$, where $e^i_a$ and $e^a_i$ is
a new dual pair of orthonormal tetrads,
\begin{equation}
	e^a_i e^b_j g_{ab} = \eta_{ij},
	~ e^a_i e_a^j = \delta_i^j,
	~ e^a_i e_b^i = \delta_a^b.
\end{equation}
Using the Minkowski tetrad $\hat{x}^a_i$ on $M$ as a reference, any other one
can be obtained by a local general linear transformation
\begin{equation}\label{tetref}
	e^a_i = \bar{T}^{i'}_i \hat{x}^a_{i'},
	~ e^i_a = T^i_{i'} \hat{x}^{i'}_a,
\end{equation}
where $T$ and $\bar{T}$ are spacetime-dependent invertible matrices, such that
$\bar{T}=T^{-1}$. Similarly, any lab frame $\hat{e}^a_i$ can
be obtained by another general linear transformation at $O$,
\begin{equation}
	\hat{e}^a_i = (T_O)^{i'}_i \hat{x}^a_{i'}.
\end{equation}
The possible discrepancy between the lab frame and the spacetime tetrad
at $O$ is
\begin{equation}
	\hat{e}^a_i = L^{i'}_i e^a_{i'} ,
	\quad
	L^{i'}_i = (T_O)^{i'}_j T^j_i ,
\end{equation}
where $L$ is clearly a Lorentz transformation, $L^{i'}_i L^{j'}_{j}
\eta_{i'j'} = \eta_{ij}$.

If this new lab-equipped spacetime is approximately Minkowski, then
both $T^i_{i'}$ and $L^{i'}_i$ must be close to the identity matrix.
This is conveniently expressed by first parametrizing them as
$T=\exp(h)$ and $L=\exp(h_O)$, and then requiring that $h$ and $h_O$ are
close to $0$. The smallness requirement aside, $h$ and $h_O$ could be,
respectively, an arbitrary matrix and an arbitrary skew-adjoint matrix,
$\eta_{ik} (h_O)^k_j = -\eta_{kj} (h_O)^k_i$. Then the metric is
\begin{align}
	g_{ab} &= \eta_{ij} e^a_i e^b_j
			= \eta_{ij} T^{i}_{i'} T^{j}_{j'} \hat{x}_a^{i'} \hat{x}_b^{j'} \\
		&= \eta_{ab} + (\eta_{i'j} h^i_{i'} + \eta_{ij'} h^j_{j'})
			\hat{x}_a^{i'} \hat{x}_b^{j'} + \O(h^2) \\
		&= \eta_{ab} + \tilde{h}_{ab}.
\end{align}
The last two equations describe the relationship between the deviations
$h^i_j$ and $\tilde{h}_{ab}$ from Minkowski space, in the tetrad and
metric formalisms respectively,
\begin{equation}
	\tilde{h}_{ab} = 2h_{(ij)} \hat{x}^i_a \hat{x}^j_b + \O(h^2),
	\quad \text{where} \quad
	h_{ij} = \eta_{i j'} h^{j'}_j.
\end{equation}

A worldline $\gamma(t)$ is described by its coordinates
$\gamma^i(t)=x^i(\gamma(t))$. Its tangent vector is denoted
$\dot{\gamma}(t)^a$.  Knowledge of the tangent vector allows one to
recover the curve as follows
\begin{equation}
	\int_{t_1}^{t_2} \d{t}\, \dot{\gamma}^a(t) (\d{x}^i)_a
	= \int_{\gamma(t_1)}^{\gamma(t_2)}\d{x}^i
	= \gamma^i(t_2) - \gamma^i(t_1) .
\end{equation}
For
convenience, all curves are affinely parametrized from $0$ to $1$.
Thus, the length of a timelike geodesic is equal to the length of
its initial tangent vector.

A geodesic $\gamma(t)$ is completely specified by its point of origin
$\gamma(0)$ and its initial tangent vector $\dot{\gamma}^a(0)$, while a
$\gamma$-parallel-transported vector $v^a(t)$ is specified by its
initial value $v^a(0)$ at $\gamma(0)$. Again, for convenience in further
calculations, all such initial data are specified with reference to some
given curve $\beta$, with $\beta(0)=O$. Namely, the point of origin is
$\gamma(0)=\beta(1)$, the initial tangent vector $\dot{\gamma}^a(0)$ is
the $\beta$-parallel-transported image of a vector $\dot{\gamma}^a_O =
\dot{\gamma}^i_O \hat{e}^a_i$, and the initial value $v^a(0)$ is the
$\beta$-parallel-transported image of a vector $v^a_O = v^i_O
\hat{e}^a_i$ (cf.~Fig.~\ref{geodesic-geom}).
The geodesic and parallel transport equations are written down and
solved to order $\O(h)$ in the \hyperref[pertsol]{Appendix}.

\subsubsection{Geodesic triangle construction}\label{triang-lin}
All curves considered in this section are perturbations of piecewise
linear paths, which are piecewise geodesic in Minkowski space. In
particular, at zeroth order in $h$, the sides of the geodesic triangle
formed by the worldlines of the lab, the probe, and the signal form an
ordered sequence of spacetime segments $(V,W,U)$, as illustrated in
Fig.~\ref{triang-geom-H}. Namely, $V$ stretches from $O$ to $P$, $W$ stretches from
$P$ to $Q$, and $U$ stretches from $Q$ back to $O$. Using the
convention of the last paragraph of the preceding section, each of the
$(V,W,U)$ segments can be specified as starting from the end point of
the preceding one (note that the order corresponds to counterclockwise
starting from $O$ in Fig.~\ref{triang-geom-H}) with the respective tangent vectors
$(tv^a,w^a,-su^a)$. Because Minkowski space is flat, it is clear that the
segments $VWU$ form a closed triangle by virtue of their tangent vectors
adding up to zero.

In approximately Minkowski space, we wish to describe a perturbed
version of the above construction. Namely, a sequence of geodesic
segments $(\tilde{V},\tilde{W},\tilde{U})$, connected from end to end,
with the respective images $(\tilde{t}\tilde{v}^a, \tilde{w}^a,
-\tilde{s}\tilde{u}^a)$ of their initial tangent vectors parallel-transported
to $O$. We take $\tilde{u}^a$ and $\tilde{v}^a$ to be unit
vectors, hence $\tilde{s}$ and $\tilde{t}$ are the proper time lengths
of the corresponding segments. To be consistent with the experimental
protocol described in Secs.~\ref{op-def} and~\ref{class-model}, we must take $\tilde{s} = s$
and $\tilde{v} = v^i \hat{e}^a_i$, require that $\tilde{w}^a$ is null,
require that the geodesic triangle closes (the end point of $U$ is in
fact $O$), and finally that the tangent to $U$ at $O$ is
$-su^i\hat{e}^a_i$ (which is also the parallel-transported image along
the $VWU$ triangle, in other words a holonomy image, of
$-\tilde{s}\tilde{u}^a$):
\begin{align}
	\tilde{s} &= s , \\
	\tilde{t} &= e^{\tilde{r}} t , \\
	\tilde{v}^a &= v^i \hat{e}^a_i = e^a_i \exp(h_O)^i_j v^j , \\
	\tilde{u}^a &= e^a_i [\exp(p_U)\exp(p_W)\exp(p_V)]^i_j (h_O)^j_k u^k , \\
	\tilde{w}^a &= e^a_i \exp(\tilde{q})^i_j w^j ,
\end{align}
where we have used the notation $\exp(p_\gamma)$ for the parallel
transport operator along $\gamma$, Eq.~\eqref{expp-def}, while
$\tilde{r}$ is a scalar and $\exp(\tilde{q})$ a Lorentz transformation
($\eta_{ik}\tilde{q}^k_j=-\eta_{kj}\tilde{q}^k_i$), both yet to be
determined. Note that $\tilde{q}^i_j$ does not parametrize $\tilde{w}^a$
uniquely, as $\exp(q)$ could always be premultiplied by another Lorentz
transformation fixing $w^k$, but it does contain three non-arbitrary
parameters. The only condition left to be satisfied is the closure of
the $VWU$ triangle (equating the end point of $U$ with $O$), which
provides four equations. These four equations can be used to solve for
the remaining undetermined parameters, one in $\tilde{r}$ and three in
$\tilde{q}^i_j$. Since we are working at linear order, we only need the
leading terms in the expansion of these unknowns
\begin{equation}
	\tilde{q}^i_j = q^i_j + \O(h^2) ,
	\quad
	\tilde{r} = r + \O(h^2) .
\end{equation}

Using the perturbative solution of the geodesic and parallel transport
equations obtained in the Appendix (Eqs.~\eqref{prlt-sol} and~\eqref{geod-sol}), at linear
order, the triangle closure condition can be written out explicitly as
\begin{align}
	0
	&= (tv^i + rtv^i + J^i_{V,\varnothing}) \\
\notag & \quad {}
		+ (w^i + q^i_j w^j + J^i_{W,V}) \\
\notag & \quad {}
		+ (-su^i + H^i_j su^j + J^i_{U,VW}) \\
	&= rtv^i + q^i_j w^j + H^i_j su^j + J^i,
\end{align}
where, using the notation of Eqs.~\eqref{H-grp} and~\eqref{J-grp}, we have defined
\begin{align}
\label{H-def}
	\eta_{ik} H^k_j = H_{ij} &= (H_{(V,W,U)})_{ij} , \\
\label{J-def}
	\eta_{ik} J^k = J_i &= (J_{(V,W,U)})_i .
\end{align}
The expression in parentheses vanishes due to the closure of the zeroth-order
geodesic triangle. Also, contracting the closure condition with
$w^i$ makes the term with $q$ vanish (due to its antisymmetry). The
solution for $r$ is then
\begin{equation}\label{r-def}
	r = -\frac{w^i J_i + w^i H_{ij} su^j}{\tau_\mathrm{cl}(s) v\cdot w} .
\end{equation}

The detailed structure of the defining expression for $r$ in
Eq.~\eqref{r-def} can be deduced from the structure of the expressions
for the $H$ and $J$ terms, given explicitly in Eqs.~\eqref{H-details}
and~\eqref{J-details}.  It can be described as follows. Both $H$ and $J$
consist of a sum of terms associated to the segments of the $VWU$
triangle. A term associated to segment $X$ consists of a tensor, built
up from the vectors $u^i$, $v^i$ and $w^i$, contracted with a (possibly
iterated) line integral over $X$, where the integrand consists of the
perturbation $h_{ij}$, possibly with several derivatives applied to it.
Schematically, this structure can be expressed as
\begin{equation}
	r \sim \sum_X r_{X,m,k} \int_X^{(m)}\d{t} \nabla^k h ,
\end{equation}
where all tensor indices have are suppressed and iterated integrals are
represented using the notation from
Eqs.~\eqref{apdx-not1}--\eqref{apdx-not4}. There is at most one
derivative ($k\le 1$) and integration over a spacetime segment is
iterated at most twice ($m\le 1$). In a bit more detail, though leaving
the tensor contractions aside, the structure of the $H$ and $J$ terms
can be expressed as follows
\begin{align}
\label{H-struct}
	H &\sim \sum_{X=V,U,W} (\sarr)_X , \\
\label{J-struct}
	J &\sim \sum_{X=V,U,W} \left[ (\darr)_X + \sum_{Y<X} (\sarr)_Y \right] , \\
\label{sarr-struct}
	(\sarr)_X &\sim \int_X \nabla h , \\
\label{darr-struct}
	(\darr)_X &\sim \int_X h + \int_X^{(1)} \nabla h,
\end{align}
where the order between the segments is counterclockwise starting from
$O$, as in Figs.~\ref{triang-geom-H} and~\ref{triang-geom-J}. The geometry of the various terms is illustrated
in Figs.~\ref{triang-geom-H} and~\ref{triang-geom-J}. This information
is used in Sec.~\ref{dim-anl}.

\begin{figure}
\includegraphics{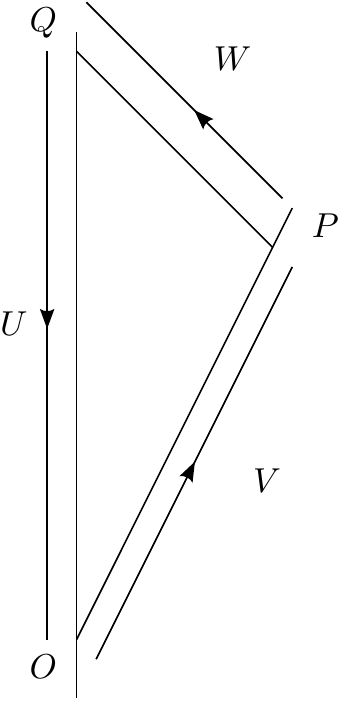}
\caption{%
Schematic structure of the $H$-term in $r$, Eq.~\eqref{H-struct}.
Notation follows Eqs.~\eqref{H-struct}--\eqref{darr-struct}.
}
\label{triang-geom-H}
\end{figure}

\begin{figure}
\includegraphics{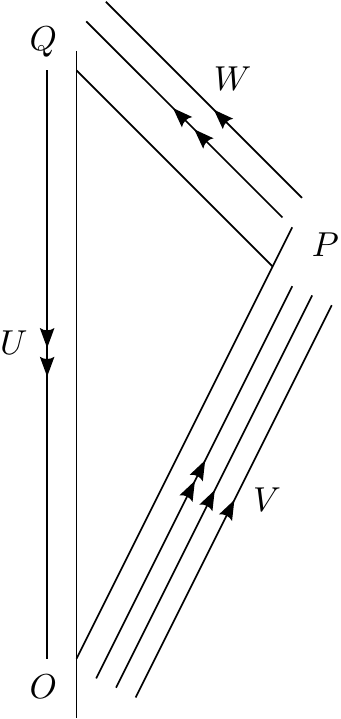}
\caption{%
Schematic structure of the $J$-term in $r$, Eq.~\eqref{J-struct}.
Notation follows Eqs.~\eqref{H-struct}--\eqref{darr-struct}.
}
\label{triang-geom-J}
\end{figure}

\subsubsection{Time delay and gauge invariance}\label{gauge-inv}
As proven in Theorem~\ref{thm-gauge-inv}, the time delay is a gauge-invariant
observable. From the formula 
\begin{equation}\label{emmtime-lin}
	\tau(s) = \tilde{t} = e^{\tilde{r}} \tau_\mathrm{cl}(s)
	= \tau_\mathrm{cl}(s) + r\tau_\mathrm{cl}(s) + \O(h^2),
\end{equation}
that relates the linearized gravity correction $r\tau(s)$,
Eq.~\eqref{r-def}, to the Minkowski space result $\tau_\mathrm{cl}(s)$,
Eq.~\eqref{emmtime-mink}, it is obvious that $r$ should be invariant
under linearized gauge transformations. This can be checked explicitly
using the gauge transformation formulas, Eqs.~\eqref{H-trans}
and~\eqref{J-trans}, for the terms making up $H_{ij}$ and $J_i$.
As a consequence, which is given at the bottom of the
\hyperref[pertsol]{Appendix}, the closure of the $VWU$ triangle in
Minkowski space implies the individual gauge invariances of both
$H_{ij}$ and $J_i$, and hence of $r$.

The last remark deserves some emphasis. There have been many attempts to
try to achieve some sort of explicit and complete classification of gauge-invariant observables
of general relativity~\cite{bergmann-obsv, rovelli-obsv, dittrich-obsv,
lp-obsv, pss-obsv}. So far, no such complete classification is known.
Even in the case of a partial classification, such lists of gauge-invariant
observables are often obtained without direct physical interpretation.
The strategy of this paper has been different. The idea was to first
establish an operational definition of an observable, in terms of the
thought experiment described in Sec.~\ref{op-def}, second to
establish a mathematical model thereof, which would naturally be gauge-invariant
though perhaps only defined implicitly, and third to use an
approximation method (linear-order perturbation theory, in this case) to
obtain an explicit expression for the observable. The result of this
strategy is an explicit (linearly) gauge-invariant expression for an
observable $r$ and a physical interpretation of it as an approximation
to the outcome of a clearly described thought experiment. It is of
course highly likely that an exhaustive classification of gauge-invariant
observables, for the simpler problem of linearized gravity, would have identified
explicit expressions like $H_{ij}$ and $J_i$, but it is at the same time
highly doubtful that they would be accompanied by the clear physical
interpretation we have managed to associated to their particular
combination in~\eqref{r-def}.

It is also worth noting that the works of Ford \emph{et
al.}~\cite{ford-lightcone, ford-top, ford-focus, ford-angle} and
Roura and Arteaga~[13] 
worked in a particular gauge and with more restricted experiment
geometries. Thus they did not obtain the same general gauge-invariant
expressions that we have derived here. However, similar expressions,
expanded even to quadratic order, were obtained in the work of Tsamis
and Woodard~\cite{tsamis-woodard}.

\section{Sketch of quantum mathematical model}\label{quant-model}
Ideally, to be able to theoretically describe quantum effects, the
thought experiment protocol described in Sec.~\ref{op-def} should be
translated into a mathematical model within a quantum theory that
encompasses both the gravitational field and the experimental apparatus
described in the protocol. A na\"ive attempt to do this is obstructed by
several difficulties: (a) the lack of a uniformly accepted (or at the
very least sufficiently general) quantum theory of gravity, (b) the
identification of a time observable in quantum mechanics, and (c) the
difficulties in modeling measurements in quantum mechanics.
Fortunately, we can propose pragmatic solutions to each of these
problems, as discussed below.

\subsection{Quantum linearized gravity}
While it is true that there is no uniformly accepted theory of quantum
gravity, there are some common standards that are expected to be met by
the final version of any proposal.
One such routine benchmark is the
ability to reproduce classical general relativity in the appropriate
limit. It is worth noting that under very general circumstances (in the
absence of strong curvatures), the dynamics of the gravitational field in
general relativity can be very closely approximated by the dynamics of
linearized gravity, also known as the theory of (linear) gravitational
waves. Our experience to date overwhelmingly demonstrates that the
quantum theory of any field whose dynamics may be approximated by a linear
theory, be it a ``fundamental'' field as in elementary particle physics
or an ``effective'' field as in condensed matter theory, is well-approximated
by the Fock quantization of the approximate linear theory.
By inductive reasoning, we presume that any proposed theory of quantum
gravity should also be benchmarked by its ability to reproduce quantum
linearized gravity. Therefore, pragmatically, we restrict ourselves to
the Fock quantization of the linearized gravity field on Minkowski space
as the approximate quantum theory of gravity for the purposes of the
mathematical model of the time delay observable.

\subsection{Time in quantum mechanics}
It is often repeated physics lore that there is no observable in quantum
mechanics corresponding to time, which naturally leads one to wonder
whether it is even possible to model time measurements in quantum
mechanics. This argument is originally due to Pauli (p.63, footnote 2
of~\cite{pauli-time}).  Fortunately, when precisely stated, it is much less
restrictive than one is first lead to believe~\cite{hilg-time,
muga-time, olkh-time, bfh-time}. The crux of this argument is a
contradiction that stems from the following hypotheses. Suppose we have
a quantum mechanical system with Hamiltonian $\hat{H}$, whose spectrum
is bounded from below, and an operator observable $\hat{T}$, whose
commutation relation with $\hat{H}$ is precisely of the form
$[\hat{T},\hat{H}]=i\hbar$, as would be appropriate for a ``time
observable'' $\hat{T}$ (together with appropriate continuity and
functional analytical conditions). Then, an appeal to the
Stone-von~Neumann uniqueness theorem (Theorem~VIII.14 in~\cite{rs1})
establishes a contradiction, as, according to the theorem, both
$\hat{T}$ and $\hat{H}$ must have continuous unbounded spectra. Thus,
there cannot exist such an observable $\hat{T}$ corresponding to time.
However, there are at least two physically reasonable ways to circumvent
this conclusion. One is to drop the hypothesis that $\hat{H}$ is bounded
from below. While this requirement is important for the global, long-term
stability of physical systems, its not necessary in some
approximate descriptions meant to describe the dynamics of some system
for bounded time intervals. Two common examples are a particle in a
linear potential and a harmonic oscillator with an inverted potential.
The other is to relax the commutation relation condition to
$[\hat{T},\hat{H}]\approx i\hbar$, where the correction terms that
restore equality may be higher order in $\hbar$ or may be small in
another way when restricted to a physically relevant subspace of
possible states.  An example is a particle on a circle, whose dynamics
dictate uniform motion, so that its position can serve as an approximate
``cyclic time'' observable, like the position of the hand of an analog
clock. Many more examples are discussed in~\cite{hilg-time, muga-time,
olkh-time} and the references therein.

\subsection{Modeling quantum measurements}
\subsubsection{Classical vs quantum measurements}
The remaining obstacle is overcome by constructing a fairly explicit,
though still rough, model of a measurement, where the system of interest
(gravitational field, lab, probe, signal), the measurement devices
(proper time clocks) and recording devices are all taken into account.
The details of this setup are described below, following some of the
ideas of~\cite{pw, gppt-montevideo} on the use of physical clocks in
quantum systems. The conclusion can be formulated as follows. After the
reception and emission times of a signal have been measured by the lab
and individually stored, the states and the dynamics of the storage
devices stabilize and decouple from the rest of the system, as well as
from each other, in the asymptotic future. Then, in the asymptotic
future, the corresponding ``readout'' observables $\hat{S}$ (recorded
reception time) and $\hat{T}$ (recorded emission time) commute and thus
define a joint (classical) probability distribution $\rho(\sigma,t) =
\langle \delta(\hat{S}-\sigma) \delta(\hat{T}-t) \rangle$, where the
expectation value is taken with respect to the (Heisenberg) state of the
total system, which we will refer to as the \emph{quantum gravitational
vacuum}. Mathematically, this probability distribution $\rho(\sigma,t)$
may be referred to as either the \emph{joint spectral density} of the
quantum gravitational vacuum with respect to the operators $\hat{S}$ and
$\hat{T}$. In more physical terms, $\rho(\sigma,t)$ is the
\emph{absolute value squared of the wave function} of the quantum
gravitational vacuum projected onto the variables $\sigma$ and $t$.

Recall that the main output the classical mathematical model of the
measurement of the time delay observable is the functional relation $t =
\tau(s)$, which can be seen as a special case of a joint probability
distribution $\rho_\mathrm{cl}(\sigma,t) =
\delta(\sigma-s)\delta(t-\tau(\sigma))$, where $s$ is the predetermined
time when the laboratory makes the measurements. This classical
probability distribution is so ``sharp'' because we take as a classical
state a definite configuration of the gravitational field.  More
generally, in the framework of classical statistical mechanics, we can
take any probability measure $d\rho(g)$ on the space of gauge
equivalence classes of the configurations of lab-equipped spacetimes.
The main output of the classical mathematical model of the time delay
measurement in this state is then the probability distribution
$\rho_{\mathrm{cl}}(\sigma,t) = \int d\rho(g)\, \delta(\sigma-s)
\delta(t-\tau_g(\sigma))$, where the dependence of the emission time
$\tau_g(s)$ on the equivalence class $g$ of lab-equipped spacetime
configurations is indicated through a subscript.  Thus, considering
quantum mechanics as an extension (or rather deformation) of classical
statistical mechanics, it is not surprising that the main output of a
quantum mathematical model of a measurement of the time delay observable
is the probability distribution $\rho(\sigma,t)$. Of course, being the
result of a quantum measurement, the distribution $\rho(\sigma,t)$
depends on more details of the measurement (such as the order in which
the measurements were carried out) than the classical distribution
$\rho_\mathrm{cl}(\sigma,t)$.

\subsubsection{Dynamical apparatus model}\label{dyn-appr}
The full system included in the model consists of the following
dynamical subsystems: the gravitational field $\hat{g}$, the lab and probe worldline
coordinates $\hat{y}_l$ and $\hat{y}_p$, the lab and probe proper time clocks
$\hat{\tau}_l$ and $\hat{\tau}_p$, the time registers $\hat{S}$ and
$\hat{T}$ in the lab, the
coordinates of the signal particles $\hat{z}$, and the time stamp
$\hat{\tilde\tau}$
carried by each signal particle. The spacetime is presumed to have a
fixed foliation by level sets of a time function $t$. The gravitational
field is taken to be completely gauge fixed, for instance using the
transverse, traceless, and $t$-compatible radiation
conditions (Sec.~4.4b of~\cite{wald}).
All worldlines can then be parametrized by $t$ as well. The dynamics of
the full system, describing its evolution with respect to time $t$, is
specified by a Hamiltonian
\begin{equation}
	\hat{H} = \hat{H}_{\mathrm{sub}} + \hat{H}_{\mathrm{geom}} + \hat{H}_{\mathrm{meas}} ,
\end{equation}
which is composed of $\hat{H}_{\mathrm{sub}}$ describing the independent
dynamics of the subsystems, of $\hat{H}_{\mathrm{geom}}$ describing the
necessary interactions or external interventions to effect the geometry
of the experimental setup, and of $\hat{H}_{\mathrm{meas}}$ describing the
coupling between the recording devices and the rest of the system during
the measurement.

Since we are mostly concerned here with a quantum model of measurement
of the clock readings, we will concentrate only on $\hat{H}_{\mathrm{meas}}$
and specify $\hat{H}_{\mathrm{sub}}$ and $\hat{H}_{\mathrm{geom}}$ mostly
verbally.

The dynamics of the gravitational field follow the appropriate gauge
fixed Hamiltonian, a term in $\hat{H}_{\mathrm{sub}}$ derived from the
Einstein-Hilbert action. The precise details of the implementation of
this idea are irrelevant for this discussion, as long as the
corresponding dynamics about the quantum gravitational vacuum can be
approximated by the dynamics of linearized gravity about the Fock
vacuum. This assumption is the basis of the calculation sketched in
Sec.~\ref{quant-calc}.

The worldlines of various particles are described by their spatial
coordinates as functions of the global time $t$, $\hat{y}^i_l(t)$,
$\hat{y}^i_p(t)$,
and $\hat{z}^i(t)$. The dynamics of these variables follow from the
appropriate terms in $\hat{H}_{\mathrm{sub}}$. The lab and probe worldlines
are timelike geodesics, with an appropriate term in $\hat{H}_{\mathrm{geom}}$
providing a kick to the probe at event $O$ to give it a fixed relative
velocity with respect to the lab. (The fact that $O$ lies on the lab
worldline can be used as one of the gauge-fixing conditions.) To imitate
the action of a continuously emitted signal field (like the
electromagnetic field), the multiplicity of signal particles are indexed
by a time $t'$ and a unit $3$-vector $\mathbf{n}$. The dynamics as
specified by terms in $\hat{H}_{\mathrm{sub}}$ and $\hat{H}_{\mathrm{geom}}$ should
be as follows. The worldline $\hat{z}^i_{(t',\mathbf{n})}(t)$ follows the
probe worldline until the time $t=t'$, after which point the worldline
of $\hat{z}^i_{(t',\mathbf{n})}(t)$ becomes null with direction determined by
$\mathbf{n}$. (This is a kind of eikonal approximation, which replaces a
massless field by a large collection of massless particles.) The initial
state of each of these particles is presumed to be of a localized
wave packet form, with negligible wave packet spread on time scales
comparable to the geometry of the experiment.

There are two potential problems in constructing a detailed quantum model
implementing the above requirements. While it is not difficult to write
down a classical version of such $\hat{H}_{\mathrm{sub}}+\hat{H}_{\mathrm{geom}}$,
the generalization to quantum mechanics is not unique, due to the usual
operator ordering ambiguities. The standard solution of this problem is
to parametrize these ambiguities and realize that different choices of
these parameters correspond to physically different models. Thus, the
fixation of these parameters must be part of the full specification of
the detailed model. Fortunately, these ordering ambiguities are
generically expected to be suppressed by powers of $\hbar$. Moreover, we
assume that their parametrization may be tuned to maximize the validity
of the approximations used in Sec.~\ref{quant-calc}. The second problem is
that coupling point particles to fields generically leads to singular
dynamics. (The singularities inherent in the naive interaction of a
classical point electron with its own electromagnetic field is a
classical example of this difficulty.) However, this issue can be dealt
with straightforwardly by spatial smearing of the particle-metric field
interaction terms. The spatial extent of the smearing becomes another
parameter whose value is to be chosen as to minimize the impact of the
smearing on the rest of the discussion. Alternatively, the interaction
term could be modified in a more sophisticated way, without introducing
non-local smearing, for instance along the lines suggested by the recent
work on classical point particles coupled to their self-force~\cite{poisson-lr} or by appealing to intrinsic quantum uncertainty of the
center of mass coordinates as in~\cite{glcm-spread}.

The state spaces for the clock and time register subsystems can be
presumed to be completely internal (i.e.,\ divorced from spacetime
coordinates) and thus can be subject to even further simplifications.
The time register subsystems $\hat{S}$, $\hat{T}$ and
$\hat{\tilde{\tau}}_{(t',\mathbf{n})}$ should be very stable, thus their
contribution to $\hat{H}_{\mathrm{sub}}$ should be approximately zero.
On the other hand, the clock variables $\hat{\tau}_l(t)$ and
$\hat{\tau}_p(t)$ should evolve approximately monotonically, with rates
set by their local proper time. This can be accomplished by a
contribution to $\hat{H}_{\mathrm{sub}}$ of the form $\dot{\hat{\tau}}_l
\hat{P}_{\tau_l} + \dot{\hat{\tau}}_p \hat{P}_{\tau_p}$, where
$\hat{P}_{\tau_l}$ and $\hat{P}_{\tau_p}$ are respectively canonically
conjugate to $\hat{\tau}_l$ and $\hat{\tau}_p$, while
$\dot{\hat{\tau}}_l$ and $\dot{\hat{\tau}}_p$ stand for the appropriate
expressions in terms of $\dot{\hat{y}}^i_l$, $\dot{\hat{y}}^i_p$ and
$\hat{g}$. Note that this choice of Hamiltonian circumvents Pauli's
impossibility argument by virtue of being unbounded from below.

Finally, $\hat{H}_{\mathrm{meas}}$ is chosen to implement the idea of
\emph{weak measurement}~\cite{bk-meas}. The idea of weak measurement can
be described as follows. Suppose there is a quantum variable $\hat{q}$
whose value we wish to measure and record in another variable $\hat{Q}$,
belonging to a recording device subsystem. Suppose that $\hat{P}$ is
canonically conjugate to $\hat{Q}$ and that $\hat{Q}$ suffers negligible
evolution on its own. Then the value of $\hat{Q}$ can be measured at any
convenient time after the weak measurement took place, thus allowing us
to infer (subject to quantum uncertainties) the value of $\hat{q}$ at
the time of measurement. The measurement itself can be modeled using the
interaction Hamiltonian $f_{\mathrm{trig}}(t) \hat{q} \hat{P}$, where
$f_{\mathrm{trig}}(t)$ is a \emph{trigger factor}, which is non-zero
only during the time interval when the measurement is supposed to take
place. The operators $\hat{q}(t)$ and $\hat{P}(t)$ are presumed to
commute at equal times, as they belong to independent subsystems, so
their ordering of the interaction Hamiltonian is unambiguous. If this
interval is of length $\Delta t$ and during it
$f_{\mathrm{trig}}(t)\approx f_0$ is approximately constant, we can see
that this interaction Hamiltonian effects the evolution
\begin{align}
\label{qafter}
	Q_{\mathrm{after}}
	&= e^{i\int\d{t}\, f_{\mathrm{trig}}(t) q P}
		Q_{\mathrm{before}} e^{-i\int\d{t}\, f_{\mathrm{trig}}(t) q P} \\
	&= Q_{\mathrm{before}} + \int\d{t}\, f_{\mathrm{trig}}(t) q \\
\label{qafter-approx}
	&\approx Q_{\mathrm{before}} + \Delta t f_0 q(t_0),
\end{align}
where the last approximation holds provided $t_0$ was part of the
measurement time interval and $\hat{q}(t)$ and $\hat{P}(t)$
evolved negligibly during it. In general, the trigger
$f_{\mathrm{trig}}$ need not be a
scalar, and may itself be a operator that commutes with both $\hat{q}(t)$ and
$\hat{P}(t)$. Also while each pair of factors commutes at equal times, in
general, they will not commute at unequal times (even with themselves).
Thus, the evolution effected by the interaction will involve the time-ordered
exponential of the interaction Hamiltonian and will look more
complicated than Eq.~\eqref{qafter}. However if the interaction Hamiltonian
can be considered as a small perturbation, then at linear order
$\hat{Q}_{\mathrm{after}}$ will look the same as Eq.~\eqref{qafter-approx}.

With the above discussion in mind, we set the measurement interaction
Hamiltonian to
\begin{multline}
	\hat{H}_{\mathrm{meas}} = \delta[\hat{\tau}_l-s]\dot{\hat{\tau}}_l \\
	{}\times 
	\left[ \hat{\tau}_l \hat{P}_S + \int\d{t'}\d{\mathbf{n}}\,
	\delta^3[\hat{z}_{(t',\mathbf{n})}-\hat{y}_l] \hat{\tilde{\tau}}_{(t',\mathbf{n})} \hat{P}_T \right].
\end{multline}
Note that the factors in each product commute at equal times (recall
that $\dot{\hat{\tau}}_l$ is \emph{not} canonically conjugate to $\hat{\tau}_l$), so
their ordering in $\hat{H}_{\mathrm{meas}}$ is unambiguous. The extra factor
of $\dot{\hat{\tau}}_l(t)$ is there to ensure that $\hat{H}_{\mathrm{meas}}$ is
defined independent of the choice of the background time $t$. Under the
hypotheses explained in the previous paragraph, the asymptotic future
values of $\hat{S}$ and $\hat{T}$ operators can be approximated as
\begin{align}
	\hat{S}^+
		&= \lim_{t\to\oo} \hat{S}(t) \\
		&\approx \hat{S}_{\mathrm{before}}
			+ \int\d{t}\, \delta[\hat{\tau}_l(t)-s]\dot{\hat{\tau}}_l(t) \hat{\tau}_l(t) \\
	T^+
		&= \lim_{t\to\oo} \hat{T}(t) \\
		&\approx \hat{T}_{\mathrm{before}}
			+ \int\d{t}\, \delta[\hat{\tau}_l(t)-s]\dot{\hat{\tau}}_l(t), \\
\notag
		&\qquad {}\times
					\int\d{t'}\d\mathbf{n}\,
						\delta^3[\hat{z}_{(t',\mathbf{n})}(t)-\hat{y}_l(t)]
							\hat{\tilde{\tau}}_{(t',\mathbf{n})}(t).
\end{align}
Provided $\hat{S}_{\mathrm{before}}$ and $\hat{T}_{\mathrm{before}}$ have zero
expectation value, the measurements of $\hat{S}^+$ and $\hat{T}^-$ the above
asymptotic limits provide unbiased estimates of the remaining terms,
which can be interpreted respectively as the reception and emission
times defined in the time delay experimental protocol.

We conclude this analysis by noting that, provided that any potential
uncertainties can be neglected or modeled and subtracted, it is
reasonable to assume that the spectral density of the quantum
gravitational vacuum with respect to $\hat{S}$ can be well-approximated by
\begin{equation}\label{S-spec-meas}
	\langle \delta(\hat{S}^+-\sigma) \rangle \approx \delta(\sigma-s) .
\end{equation}
It follows that it is then also reasonable to assume that the joint
spectral density of the quantum gravitational vacuum with respect to
$\hat{S}^+$ and $\hat{T}^+$ will be well-approximated by
\begin{equation}\label{ST-spec-meas}
	\rho(\sigma,t) =
	\langle \delta(\hat{S}^+-\sigma) \delta(\hat{T}^+-t) \rangle
		\approx \delta(\sigma-s) \rho_s(t) .
\end{equation}
The probability distribution $\rho_s(t)$ then has the interpretation of
the spectral density of the quantum gravitational vacuum with respect to
the quantum emission time operator observable $\hat{\tau}(s)$, where the
quantum emission time observable can then be identified as
$\hat{\tau}(s) = \hat{T}^+$. The quantum time delay observable is then
simply $\delta\hat{\tau}(s) = s-\hat{\tau}(s)$.

At this point it is worth considering a bit more precisely how the
probability distributions of Eqs.~\eqref{S-spec-meas}
and~\eqref{ST-spec-meas} relate to those that would be obtained by a
physically realized experiment following the same operational protocol.
Above, we have explicitly stated that these expressions are expected to
be applicable provided all sources of quantum (or classical)
fluctuations other than the quantum gravitational vacuum are neglected.
How can this neglect be reasonable? These neglected sources are
numerous, as for instance discussed in the later Sec.~\ref{fin-resol}.
Moreover, by now, the study of uncertainties in measurements of space
and time intervals induced by quantum fluctuations of the internal
states of a measurement apparatus is classical subject, going back to a
seminal paper of Salecker and Wigner~\cite{salecker-wigner}. These
effects can be quite large compared to the Planck-scale effects
(Sec.~\ref{dim-anl}) that we are concerned with here.

The main difference between the effects we neglect and the effect that
we actually study is that the former depend primarily on the internal
physics of the apparatus, while the latter depends crucially on the
dynamical quantum gravitational field. That is, the effect that we study
is genuinely due to quantum gravity, while those we neglect are not.  If
we are concerned with a question of principle, which is to account for
all possible sources contributing to the variance of the time delay or
time emission observables described earlier, it is not \emph{sufficient}
to include \emph{only} the internal apparatus sources or \emph{only} the
quantum gravitational effects (independent of their relative size), it
is in fact \emph{necessary} to consider both of them.  The internal
apparatus fluctuation sources have already been studied extensively in
the literature spawned by the original work~\cite{salecker-wigner}. On
the other hand, genuine quantum gravitational effects have received much
less attention (a review of the relevant literature was given in the
\hyperref[intro]{Introduction}) and are hence the main focus of the
current work. Since these contributions to the observational variance
are separate, they can be analyzed separately and, at the leading
perturbative order, contribute essentially additively.

When restored, the effects of quantum fluctuations of the internal
dynamics of the clocks and recording devices used in described
measurement models replace the sharp $\delta$-function in
Eq.~\eqref{S-spec-meas}, as well as in the analogous equation for
$\hat{T}^+$, by a broader probability distribution
$\rho_{\text{int}}(\sigma-s)$. The joint probability
distribution~\eqref{ST-spec-meas} would also be broadened broadened by
convolution with $\rho_{\text{int}}$ in both the $\sigma$ and $t$
arguments.

Finally, the calculations outlined in this paper do more than answer a
question of principle. As previously mentioned, they serve as a toy
model for resolving the challenges inherent in the problem of
observables in quantum gravity. So the lessons learned here, may be
applicable to a situation like early Universe cosmology, which is a more
likely source of physically measurable quantum gravitational
effects~\cite{woodard-qg}.

\section{Sketch of calculation in quantum linearized gravity}\label{quant-calc}
The point of the preceding section was to motivate that the output of an
explicit quantum calculation should be a probability distribution
$\rho_s(t)$, which should be interpreted as the spectral density of the time
register $\hat{T}$ with respect to the quantum gravitational vacuum (projected
onto the $s$-eigensubspace of the time register $\hat{S}$).
Phenomenologically, dropping the subscript $s$ since no other probability
distribution would be considered from now on, $\rho(t)$ should be interpreted as the statistical
distribution of measurement outcomes for an ensemble of repeated
measurements of the emission time $\hat{\tau}(s)$ (reproducing the geometry of
the experiment for each repetition). Again, motivated by the discussion
of the preceding section, we propose that, within the linearized gravity
approximation and keeping all available parameters tuned to minimize all
influences on the measurement of emission time other than the effects of
the gravitational field, the role of the observable $\hat{T}$ should be played
by the linearized expression~\eqref{emmtime-lin} with the classical graviton field
everywhere replaced by the quantized graviton field (with one caveat to
be discussed below) and the role of the quantum gravitational vacuum
should be played by the Poincar\'e-invariant Fock vacuum of the graviton
field. Within this proposal, the probability distribution $\rho(t)$ can be
computed explicitly. We leave the details of this calculation to be
presented elsewhere~\footnote{B.\ Bonga and I.\ Khavkine  (in
preparation).} and only discuss some general aspects of it that can
be deduced from dimensional analysis and the nature of perturbative
calculations.

\subsection{Gaussian spectral density}
In linearized gravity on Minkowski space, we interpret the Poincar\'e-invariant
Fock vacuum $|0\rangle$ as the quantum gravitational vacuum
and the operator $\hat{\tau}(s)$ as the quantum emission time. From the
preceding discussion, our goal is to evaluate the spectral
density $\rho(t)$ of $|0\rangle$ with respect to $\hat{\tau}(s)$. This
simplified problem has an explicit solution. A linear field theory is
essentially a collection of harmonic oscillators and, by construction,
the Fock vacuum is a Gaussian state (with zero mean) with respect to the
oscillator variables. In other words, the Fock vacuum is also Gaussian
with respect to any observable linear in the graviton
field $\hat{h}(x)$, such as $\hat{\tau}(s)$. Therefore, the sought
probability distribution $\rho(t)$ is Gaussian. It is fully determined
by its mean, which is just the classical Minkowski space expression~\eqref{emmtime-mink},
and its variance, which can be obtained from the expectation value
$\langle0|\hat{\tau}(s)^2|0\rangle$.

It is clear that the calculation of the probability distribution
$\rho(t)$ is reduced to evaluating a single vacuum expectation value
given above. Recall that the emission time is invariant with respect to
gauge transformations that fix the synchronization point $O$
and the lab tetrad frame at it. However, due to the Poincar\'e
invariance of the Fock vacuum, the expectation value
$\langle0|\hat{\tau}(s)^2|0\rangle$, which combines the observable and
the state, is actually invariant under arbitrary gauge transformations
and no longer depends on the special choice of synchronization point or
lab tetrad frame.

\subsection{Causal inequalities and perturbation theory}
In Sec.~\ref{caus-ineq}, we found that the time delay and emission time
observables obey some causal inequalities. The validity of these
inequalities relies mainly on the Lorentzian character of the metric
tensor and the geodesic character of inertial motion. Therefore,
classically, violations of these inequalities would be evidence of
superluminal signal propagation or violation of the equivalence
principle. Thus we can naturally take the following objective criterion
for the presence of causality violation in quantum theory: violation of
causal inequalities by the spectral density of the quantum gravitational
with respect to the time delay observable.

Quantum theory is famous for tunneling phenomena. For example, a quantum
state may be such that a measurement may find a particle (though likely
with only small probability) in a region that is classically forbidden to
it. Similarly, we would like to investigate whether the causal
inequalities are strictly obeyed in the quantum theory or are subject to
violations via ``quantum tunneling.'' A definite answer to this question
would go a long way toward informing the debate on whether \emph{any}
quantum theory of gravity \emph{necessarily entails} causality
violations~\cite{wheeler-foam, kent-caus}. 
While it would be very difficult to settle this
debate, in large part due to the breadth of the subject matter, as
stated. However, an explicit example of a quantum gravitational theory
without causality violation would force a weakening of the ``necessarily
entails'' clause. Equally, a fairly conservative (no extra matter, no
extra dimensions, no causality violation in the classical limit, though
taken only in a linear approximation) example
of a quantum gravitational model with causality violation would
strengthen the evidence for the ``any'' clause.

Unfortunately, as should become immediately obvious, the perturbative
calculations outlined in this section are not conclusive enough to
establish whether causality violation actually takes place or not. In
short, since the spectral density is expected to be Gaussian (as
discussed in Sec.~\ref{quant-model}),
\begin{equation}
	\rho(t) \sim \exp\left(-\frac{(t-\tau_\mathrm{cl})^2}{2(\Delta \tau)^2}\right),
\end{equation}
with some mean $\tau_\mathrm{cl}$ and variance $(\Delta \tau)^2$,
all real values of $t$ acquire a non-zero probability of being measured.
Thus, the causal bounds on $t$ are clearly violated, as illustrated in
Fig.~\ref{spec-dens-gaus-width}. However, the
responsibility for this violation can be ultimately traced back to the
perturbative approximation rather than to the quantum theory. Recall
that classically, as discussed in Sec.~\ref{caus-ineq}, the proofs of these
causal bounds crucially relied on the Lorentzian character of the
metric, as well as on the detailed behavior of geodesics in Lorentzian
spacetimes. Neither of these properties survives in perturbation theory.
One can find classical field configurations of $h_{ij}(x)$ for which the
linearized classical expression for $\tau(s)$ violates the causal
inequalities as well. For these field configurations $h_{ij}(x)$ would
have to be of the same order as the background metric $\eta$, which is
precisely the regime where perturbation theory is no longer applicable.

\begin{figure}
\includegraphics[width=\columnwidth]{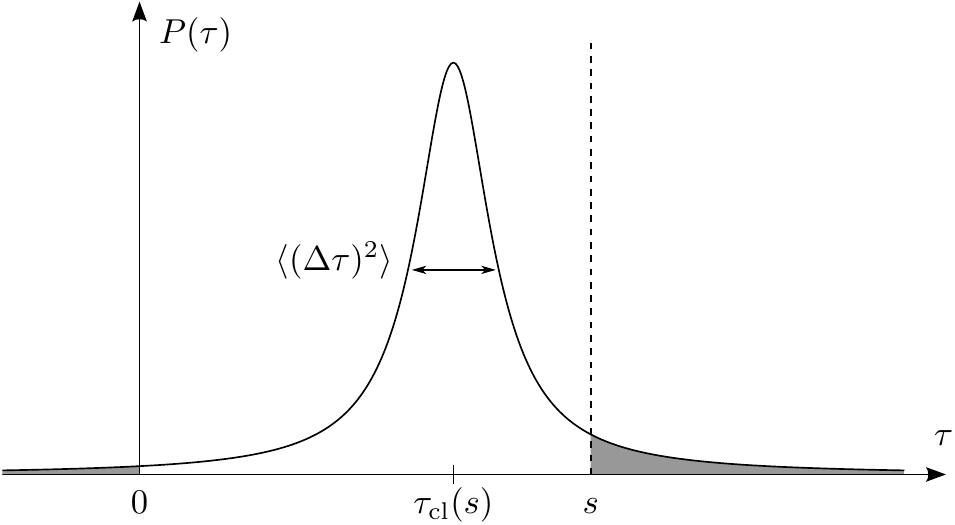}
\caption{%
Spectral density of the linearized quantum gravitational vacuum with
respect to the emission time $\tau(s)$. In a linear field theory, the
spectral density is expected to be Gaussian, with mean
$\tau_\mathrm{cl}(s)$ and variance $\langle(\Delta\tau)^2\rangle$. The
mean is the emission time in Minkowski space, Eq.~\eqref{emmtime-mink}.
This probability distribution clearly penetrates the shaded region,
which is forbidden by classical causal inequalities.
}
\label{spec-dens-gaus-width}
\end{figure}

In conclusion, the perturbatively calculated $\rho(t)$ may be presumed
to give accurate results around the interval $[\tau_\mathrm{cl}-\Delta
\tau, \tau_\mathrm{cl}+\Delta \tau]$, but not for larger or smaller values.
Unfortunately, the information needed to decide whether causal
inequalities are actually violated requires the knowledge of $\rho(t)$
precisely in the regions where the perturbative approximation is no
longer expected to be valid.

\subsection{Finite measurement resolution}\label{fin-resol}
The detailed calculation of the vacuum fluctuation of $\hat{\tau}(s)$
immediately presents a problem: it is infinite. This infinity can be
traced back to the singularity of the two-point function
\begin{equation}
	G(x-y) = \langle \hat{h}(x) \hat{h}(y)\rangle
		\sim \frac{1}{(x-y)^2}
\end{equation}
in the coincidence limit $x\to y$. This
infinity has a straightforward physical interpretation, which at the
same time suggests a meaningful regularization of the divergence.

Any realistic measurement of the quantum field $\hat{h}(x)$ is
carried out by a detector with finite spatial and temporal resolution.
Thus, no measurement is ever sensitive directly to the field
evaluated at a single spacetime point $x$, rather measurements are
typically sensitive to smeared fields~\cite{br-meas, bs-meas}
\begin{equation}\label{hsmeared}
	\tilde{h}(x) = \llangle \hat{h}(x-z) \rrangle = \int\d{z}\, \hat{h}(x-z) g(z) ,
\end{equation}
where $g(z)$ is a smooth test function peaked in the neighborhood of
$0$, and $\llangle{\cdots}\rrangle$ denotes the smearing with respect to
$g(z)$. The smearing function $g(z)$ may be interpreted as the detector
sensitivity profile, which clearly depends on how the measurement was
carried out. The vacuum fluctuation of the smeared field at $x$ is then
always finite 
\begin{align}
	\langle \tilde{h}(x)^2 \rangle
		&= \int\d{z_1}\d{z_2}\, G(x-z_1+z_2) g(z_1) g(z_2) \\
		&= \int\d{z}\, G(x-z) \tilde{g}(z) \\
		&\sim \llangle \frac{1}{(x-z)^2} \rrangle \\
		&\sim \frac{1}{\mu^2} ,
\end{align}
where $\tilde{g}(z) = (g*g)(z)$ is the convolution of $g(z)$ with
itself, by abuse of notation $\llangle{\cdots}\rrangle$ also denotes
smearing with respect to $\tilde{g}(z)$, and $\mu$ is the length scale
over which $g(z)$ has appreciable support, which is the spatiotemporal
resolution of the detector.  Physically, this estimate means that the
root-mean-square noise in a detector, due to quantum fluctuations, grows
as inversely proportional to its resolution (\cite{br-meas}
and Secs.~10.9.1--2 of~\cite{mw-opt}). Such fluctuations are vividly
illustrated in the context of quantum optics
in Fig.~2.1 of~\cite{leonhardt}.

Since we are working with an idealized model of physical measurement, it
is natural that the quantum fields entering into the expression for the
emission time $\hat{\tau}(s)$ should be smeared. Unfortunately, the
details of precisely how the smearing is to be done are quite
complicated. They in general depend on all the aspects of the
experiment: the resolutions of the proper time clocks, the coupling of
the lab and probe centers of mass to the gravitational field in geodesic
motion, sharpness of the signals transmitted by the probe, etc. For the
purposes of this discussion, we do not need such detailed information,
as for simplicity we would only be interested in the asymptotic limit of
perfect detector resolution $\mu\to 0$. This limit is obviously
divergent, so we can settle for the leading term in an expansion in
inverse powers of $\mu$.
Therefore, we simply assume that
all occurrences of the point field $\hat{h}_{ij}(x)$ are replaced by the
smeared field $\tilde{h}(x)$, Eq.~\eqref{hsmeared}. That is, the smearing
function $g(z)$ is the same everywhere, independent of $x$. The only
thing we assume about $g(z)$ is that it is regular enough to render the
vacuum fluctuation of $\hat{\tau}(s)$ finite and that it is peaked only
at the origin, with appreciable support over a region of size $\mu$,
so that we can estimate its moments as
\begin{equation}
	\llangle z^k \rrangle \sim \mu^k,
\end{equation}
where $z^k$ represents any homogeneous expression of order $k$ in the
components of $z$. It is worth noting that, as stated, this smearing
convention breaks background Lorentz invariance. This is clearly
unphysical. Nevertheless, we make this assumption in the current and
some future calculations for the purposes of working out their general
structure. A more physical smearing convention should be re-examined in
the future alongside with more realistic models of lab, probe, and
signal subsystems.

It is worth noting at this point that the works of Ford \emph{et
al.}~\cite{ford-lightcone, ford-top, ford-focus, ford-angle} took a
completely different approach to the regularization of divergences
arising from the singularities of the graviton two-point function. In
particular, they treated several scenarios that produced fluctuations
different from Minkowski space (finite temperature state, squeezed
vacuum, extra compactified dimensions), which were regularized by
subtracting the divergent Minkowski, Poincar\'e-invariant vacuum result.
Thus these previous calculations computed the deviation of the quantum
fluctuations from that of Minkowski space, but did not directly address
Minkowski space results themselves, unlike we do in this work.

\subsection{Dimensional analysis}\label{dim-anl}
Looking at the structure of the explicit expression for the linearized
correction $r\tau_\mathrm{cl}(s)$ to the emission $\tau(s)$,
Eq.~\eqref{emmtime-lin}, it is fairly obvious that a detailed
calculation of the variance $\langle \tilde{r}^2 \rangle$ of the smeared
correction $\tilde{r}$, where each occurrence of the classical field
$h(x)$ is simply replaced by the smeared quantum field $\tilde{h}(x)$,
will be quite involved. The expression for $r$, whose structure is
illustrated at the end of Sec.~\eqref{triang-lin}, contains on the order of $10$
terms. Therefore, the number of terms in $\tilde{r}^2$ will be of order
$100$. Each of these terms consists of two (possibly iterated) integrals
over spacetime segments over (possibly iterated) derivatives of the
smeared $\llangle G(z) \rrangle$ graviton two-point function. The total
number of nested integrations for each term is five (5), which includes two (2)
from the spacetime segment and three (3) from smearing. Using symmetry, one or
two integrations may be made trivial.  However, it is unavoidable that
each of the order $100$ is a high-dimensional integral. Moreover, the
integrands are distributions, rather than continuous functions, whose
singularities are ultimately traceable to the light-cone and coincidence
singularities of the graviton two-point functions. The high
dimensionality of the integrals and the distributional character of the
integrands makes it very difficult to treat them numerically. On the
other hand, the integrands of these order $100$ terms may have many
different algebraic structures, preventing the evaluation of a single
master analytical expression that could be uniformly applied to all of
them. Splitting each term into simpler pieces and considering all
possible cases of algebraic structures easily leads to thousands of
individual integrals to be evaluated analytically.  There is little
choice but to resort to hybrid numerical-analytical calculations
automated using computer algebra software. These detailed calculations
are in progress and their results will be reported elsewhere~[32]. In the
rest of this section we concentrate on some intermediate, qualitative
results that may be obtained by straightforward dimensional analysis.

Taking dimensionful constants into account, and keeping in mind that the
field $h(x)$ is itself dimensionless, the unsmeared graviton two-point
functions has the form
\begin{equation}
	\langle \hat{h}(x) \hat{h}(y) \rangle = G(x-y) \sim \frac{\ell_p^2}{(x-y)^2},
\end{equation}
where the denominator of the last expression is the spacetime interval
squared, $(x-y)^2 = \eta_{ij} (x-y)^i(x-y)^j$, and the numerator is the
Planck length squared, $\ell_p^2 = G\hbar/c^3$. What is important
here is that $G(z)/\ell_p^2\sim 1/z^2$ is a homogeneous function of $z$ of degree $-2$ and hence
of length dimension $[G(z)/\ell_p^2]=-2$. The scales $\mu$ and $\ell_p$
and the components of $z$ itself all have length dimension
$[z]=[\mu]=[\ell_p] = 1$. On the other hand, a derivative with respect
to $z$ has length dimension $[\nabla]=-1$. Generically it has the effect
$\nabla z^n \sim z^{n-1}$. Using the convention from the
\hyperref[pertsol]{Appendix}, the spacetime segment integrals are all
affinely parametrized from $0$ to $1$ and hence are dimensionless,
$[\int_X] = 0$. On the other hand, integration over a spacetime segment
has the generic effect $\int_X z^{n} \sim z^{n+1}/s$, where $s$ is the
length scale of the segment $X$, $[s]=1$, and $z$ on the right hand side
corresponds to the coordinates of the segment's end points.

Without smearing, the expectation value $\langle r^2 \rangle$ is
infinite. Smearing introduces a regulating length scale $\mu$, the
detector resolution. Therefore, the smeared expectation value $\langle
\tilde{r}^2 \rangle$ should diverge as $\mu\to 0$. The details of the
approach of $\mu$ to $0$ in general depend on the details of the
smearing functions. Fortunately, a kind of universality among all
well-behaved localized smearing functions can be obtained by
concentrating on the leading terms in an expansion of the result in
inverse powers of $\mu$.

From the structure of the explicit expression for $r$, keeping in mind
that derivatives worsen singularities while integrals improve them, the
most singular contribution should come from the terms with the greatest
number of derivatives and the least number of integrals. Namely, $r \sim
r_X \int_X \nabla h$, where $r_X$ is some tensorial coefficient
dependent on the geometry of the segment $X$. Note that, since both $r$
and $h$ are dimensionless, the tensorial coefficient $r_X$ must have
length dimension $[r_X] = 1$ and be of order $s$ in magnitude, due to
the standard affine parametrization of the integral over $X$. In fact,
it should be of size $s$, which is the length scale of the spacetime
segment $X$. The leading-order contribution to the smeared variance of
$r$ can then be estimated as follows:
\begin{align}
	\langle \tilde{r}^2 \rangle
	&\sim \left\langle \left( \sum_X r_X \int_X \nabla \tilde{h} \right)^2
		\right\rangle \\
	&\sim s^2 \llangle \int_X \int_Y \nabla^2 G(z) \rrangle \\
	&\sim s^2 \llangle \int_X \int_Y \nabla^2 \frac{\ell_p^2}{z^2} \rrangle \\
	&\sim s^2 \llangle \int_X \int_Y \frac{\ell_p^2}{z^4} \rrangle \\
	&\sim s^2 \llangle \frac{\ell_p^2}{s^2 z^2} \rrangle
		\sim \llangle \frac{\ell_p^2}{z^2} \rrangle 
		\sim \frac{\ell_p^2}{\mu^2}
\end{align}
Detailed calculations show that many terms do have this scaling
behavior, but also that terms of the form $(\ell_p^2/\mu^2)\log(\mu/s)$
and $(\ell_p^2/\mu^2)(s/\mu)$ show up at intermediate stages as well.
While the appearance of logarithmic scaling is not unusual in quantum
calculations, the last term is somewhat surprising and, if uncanceled
in the final result, may cast serious doubt on the validity of the
linearized approximation in the regimes of very large $s/\mu$ ratios.
This ratio corresponds to that of the spatial and temporal extent of the
experiment to the resolution of the detectors involved.

From Eq.~\eqref{emmtime-lin}, the perturbative correction to the
emission time $\tau(s)$ and the time
delay $\delta\tau(s)$ scale like $s r$, and so the quantum variances of
$\hat{\tau}(s)$ and $\delta\hat{\tau}(s)$ should scale like
$s^2\langle\tilde{r}^2\rangle$, since $\langle\tilde{r}\rangle=0$. From this
and the possible leading-order contributions to the smeared variance of
$r$ we can deduce the root-mean-square size of
fluctuations expected in observations of the time delay due to the
fluctuations of the quantum gravitational vacuum shown in Table~\ref{rms-est}.
Let us contrast two possible experimental contexts. In the
\emph{laboratory} context, the spatiotemporal extent of the experiment
(with time-length conversion via the speed of light) is expected to be
$s\sim 1~\mathrm{m} \sim 10^{-9}~\mathrm{s}$, while in the
\emph{cosmological} one $s\sim 1~\mathrm{Mpc}$. Recall that a megaparsec
is $1~\mathrm{Mpc} \sim 10^{22}~\mathrm{m} \sim 10^{14}~\mathrm{s}$. For
the detector resolution scale, we select $\mu\sim 1~\mathrm{nm} \sim
10^{-18}~\mathrm{s}$. This is of the order of the wavelength of X-rays,
which are consistently available in both contexts. The Planck scale
as usual is $\ell_p \sim 10^{-35}~\mathrm{m} \sim 10^{-44}~\mathrm{s}$.
\begin{table}[h]
\caption{%
	Estimates for the root-mean-square size of quantum fluctuations in observations of
	the time delay for different possible leading-order behaviors in
	$\mu\to 0$.
}
\label{rms-est}
\begin{ruledtabular}
\begin{tabular}{cccccc}
Context & $s$ & $\mu$ & $\frac{s\ell_p}{\mu}$
	& $\frac{s\ell_p}{\mu}\log(\frac{s}{\mu})^{1/2}$
	& $\frac{s^{3/2}\ell_p}{\mu^{3/2}}$\\
\hline
laboratory
	& $1$ m & $1$ nm & $10^{-35}$ s & $10^{-34}$ s & $10^{-30}$ s \\
cosmological
	& $1$ Mpc & $1$ nm & $10^{-12}$ s & $10^{-11}$ s & $10^3$ s
\end{tabular}
\end{ruledtabular}
\end{table}

All of the above estimates, except one, are well below the sensitivity
or noise thresholds of the current state of the art of experimental and
observational technology. So it is not surprising that kind of effect
has yet to be observed.  Clearly, if the largest of the above estimates
were correct, we would have observed this effect long ago due to the
very large fluctuations in the arrival times of high frequency photons
from distant galaxies. Of course, since that result is only preliminary
and comes from the least understood part of intermediate calculations,
it has to be taken with a grain of salt. But it does highlight the fact
that the linearized approximation employed in the calculations described
above may not be valid on large timescales. This is not an unusual
feature of perturbation theory. For example, it was noticed long ago in
celestial mechanics that there exist perturbative terms that scale with
positive powers of time, so-called \emph{secular terms}, in otherwise
non-perturbatively stable systems~\cite{giacaglia}. This remark also offers some
hope that if, in fact, the perturbation expansion in our calculations
breaks down on large time scales that this problem could be repaired
using the methods already developed for dealing with secular
perturbative terms in celestial mechanics or other fields.


\section{Discussion}\label{discussion}
We have operationally defined a particular physical observable, the time
delay $\delta\tau(s)$ [as well as the related emission time $\tau(s)$],
and have provided both exact, implicit and approximate, explicit
mathematical models for it. The time delay satisfies two important
inequalities (stemming from the maximality of light speed and from local
geodesic extremality) directly related to the causal structure of
classical Lorentzian spacetimes. Thus, it is sensitive to the causal
structure of classical dynamical gravity. Moreover, we have sketched how
the same operational definition can be used to define a quantum time
delay observable and how to compute its variance due to quantum
fluctuations of the quantum gravitational vacuum, in linearized gravity,
given the usual Fock quantization of the graviton field.

This work opens up many potential lines of investigation. Foremost among
them, is the completion of the detailed calculation of the variance of
the time delay due to the quantum fluctuations of the quantum
gravitational vacuum. That work is in progress and will be reported on
elsewhere~[32].

An important issue that needs to be explored is the detailed
construction of a quantum model of the measurement apparatus sketched in
Sec.~\ref{quant-model}.  This model should take into account the quantum dynamics
of the center of mass motions of the probe and laboratory, a more
detailed representation of the time stamped signal transmitted by the
probe, and of the weak measurements of the relevant clock and signal
systems.  Some existing literature may be helpful in refining these
models~\cite{glcm-spread, gppt-montevideo, pw, bk-meas}.

The triangular geometry of the time delay experiment is one of the
simplest possible. However, there is no conceptual obstacle to
generalizing the same methodology to more complex geometries, including
piecewise geodesic motion with more components and even accelerated motion.
It is also natural to capture other effects of the fluctuating
gravitational field on the signal, such as angular blurring and other
image effects at the reception of the signal by the lab. These effects
were previously considered in~\cite{ford-angle}, though with caveats similar
to those given in the Introduction while discussing~\cite{ford-lightcone}.

It is clear that a whole class of physical observables of manageable
mathematical complexity and with clear physical interpretation can be
constructed using the same methodology. This class can be aptly named
\emph{astrometric observables} or \emph{quantum astrometric
observables}, when referring to them in the quantum context.

Yet another important generalization is to background geometries other
than Minkowski space. Cosmological and black hole backgrounds are of
particular importance. For instance, a similar calculation could model
the fluctuation in the arrival time of photons from distant galaxies due
to the intrinsic quantum fluctuation in the cosmological quantum state
of the graviton field. Such fluctuations would contribute to the spread
of the arrival times of photons from distant $\gamma$-ray bursts~\cite{grb}.
Undoubtedly, the final observational data compounds many effects,
including the likely more dominant astrophysical ones and those due to
in transit scattering. However, a thorough understanding of quantum
fluctuations in astrometric observables in linearized gravity (or
related approximations) is necessary before the observational data could
be used to infer the existence of exotic effects like violation of local
Lorentz invariance, spacetime discreteness or granularity, modified
dispersion relations, etc.~\cite{grb, ac-grb, hs-pheno}, since the model of quantum gravity
considered in the present calculation exhibits none of these features.
Also, the behavior of light signals and inertial or accelerated probes
in the vicinity of a black hole can be used to give an operational
meaning to the location of its horizon. The fluctuations of some quantum
astrometric observables could then be used to unambiguously study the
inferred quantum fluctuations of the black hole horizon.

A limitation of the proposed method of calculating the quantum vacuum
fluctuation of the time delay (or any other astrometric observable) in
quantum linearized gravity is the inability of perturbation theory to
address questions involving strong fields, like the question of whether
the quantum theory respects or violates the causal inequalities
discussed in Sec.~\ref{caus-ineq}. Unfortunately, in the physically
relevant case of four-dimensional spacetime, the only effective
calculational tool we have is perturbative quantum field theory.
Perhaps an improved perturbation theory in the spirit of the Magnus
expansion~\cite{bcor-magnus} can be used to keep the signature of the
metric tensor Lorentzian while still using perturbative methods, so that
the causal inequalities are not immediately violated already at the
classical level.  On the other hand, the time delay and astrometric
observables in general can be defined equally well in any spacetime
dimension. This opens up the possibility of adapting the quantum
calculation to the two- and three-dimensional versions of general relativity,
which can be solved exactly. A family of classical observables of
3-dimensional gravity that could be said to fall into the astrometric
category have been identified and expressed in variables that are
appropriate for treatment in the quantum theory in~\cite{meus}. The
quantum calculations have yet to be carried out.

Finally, since astrometric observables are defined in a way independent
of the underlying model of quantum gravity, their behavior could in
principle be studied in any of the popular (or even not so popular)
proposed theories of quantum gravity. It is often the case that it is
difficult to compare calculations between these different theories, due
to the very different underlying mathematical frameworks. It would be
very interesting to see if quantum astrometric observables can serve as
a benchmark suite to compare the predictions of each of these theories
on equal footing. 

\begin{acknowledgments}
The author would like to thank Renate Loll, Albert Roura, Sabine
Hossenfelder and Paul Reska for their support and helpful discussions.
The author also acknowledges support from the Natural Science and
Engineering Research Council (NSERC) of Canada and from the Netherlands
Organisation for Scientific Research (NWO) (Project No.\ 680.47.413).
\end{acknowledgments}

\appendix*

\section{Perturbative solution of geodesic and parallel transport equations}
\label{pertsol}
Let $e^a_i$ be a tetrad field, as described in Sec.~\ref{tetform}. Let
$\gamma(t)$ be a parametrized spacetime curve and $v_\alpha^a(t)$,
$\alpha=0,1,2,3$, an orthonormal tetrad along it. Its components
$v^i_\alpha(t)$ in the basis of the spacetime tetrad are given by
$v^a_\alpha(t) = v^i_\alpha(t) e^a_i(\gamma(t))$. The pair
$(\gamma,v^a_\alpha)$ is a geodesic with a parallel-transported
orthonormal frame on it if it satisfies the following conditions
\begin{align}
	\dot{\gamma}(t)^a &= v^a_0(t), \\
	\dot{\gamma}(t)^a \nabla_a v^c_\alpha(t) &= 0.
\end{align}
When the spacetime dual tetrad field is expressed in terms of a reference
inertial coordinate dual tetrad $\hat{x}^i_a$ (Eq.~\eqref{tetref}) as $e^i_a =
T^i_j \hat{x}^j_a$, the
geodesic and parallel transport equations are expressed in tetrad
components as follows
\begin{align}
\label{geoeq}
	\dot{\gamma}^i &= v^a_0 \hat{x}^i_a = v^j_0 \bar{T}^i_j,\\
\label{pteq}
	\dot{v}^k_\alpha &= -v^i_0 \omega\indices{_i^k_j} v^j_\alpha,
\end{align}
where $\eta_{kk'}\omega\indices{_i^{k'}_j} = \omega_{ikj} =
\omega_{i[kj]}$ are the Ricci rotation coefficients (Sec~3.4b
of~\cite{wald}). The Ricci
rotation coefficients can be computed in terms of the transformation
matrix $T^i_j$. Below, $\del_a=\hat{x}^i_a \del_i$ denotes the
coordinate derivative, $\Gamma^c_{ab}$ the usual Christoffel tensor,
encoding the difference between $\nabla_a$ and $\del_a$, and
$\Gamma_{cab} = g_{cc'} \Gamma^c_{ab}$.
\begin{align}
	\omega_{ikj}
	&= e_{kc} e_i^a \nabla_a e_j^c \\
	&= e_{kc} e_i^a \del_a e_j^c + e_{kc} e_i^a \Gamma^c_{ab} e_j^b \\
	&= e_{kc} e_i^a \del_a e_j^c + e_i^a e_j^b e_k^c \Gamma_{cab}.
\end{align}
Each term on the right hand side is evaluated separately below and
expressed in terms of a single quantity $\alpha_{ikj}$.
\begin{align}
	\alpha_{ikj}
	&= \bar{T}^{i'}_i (\del_{i'} T^{l}_{j'}\eta_{lk}) \bar{T}^{j'}_{j} \\
	e_{kc} e_i^a \del_a e_j^c
	&= \bar{T}^{i'}_i (\del_a \bar{T}^{j'}_j) T^l_{l'} \eta_{lk}
		\hat{x}^{l'}_c \hat{x}_{i'}^a \hat{x}_{j'}^c \\
	&= \bar{T}^{i'}_i (\del_{i'} \bar{T}^{l'}_j) T^l_{l'} \eta_{lk} \\
	&= -\bar{T}^{i'}_i \bar{T}^{j'}_{j}(\del_{i'} T^{l''}_{j'})
		\bar{T}^{l'}_{l''} T^l_{l'} \eta_{lk} \\
	&= -\bar{T}^{i'}_i (\del_{i'} T^{l}_{j'}\eta_{lk}) \bar{T}^{j'}_{j}
		= -\alpha_{ikj} , \\
	e_i^a e_j^b e_k^c \Gamma_{cab}
	&= \frac{1}{2} e_i^a e_j^b e_k^c (\del_a g_{bc}+\del_b g_{ac}-\del_c g_{ab})\\
	&= \frac{1}{2} \bar{T}^{i'}_i \bar{T}^{j'}_j \bar{T}^{k'}_k \\
\notag & \qquad {} \times
		\left[ \del_{i'}g_{j'k'} + \del_{j'}g_{i'k'} - \del_{k'}g_{i'j'} \right] \\
	&= \frac{1}{2} \bar{T}^{i'}_i \bar{T}^{j'}_j \bar{T}^{k'}_k
		\left[
			\del_{i'}(T^{J}_{j'} \eta_{JK} T^{K}_{k'}) \right. \\
\notag & \qquad \left. {} +
			\del_{j'}(T^{I}_{i'} \eta_{IK} T^{K}_{k'}) -
			\del_{k'}(T^{I}_{i'} \eta_{IJ} T^{J}_{j'})
		\right]
		\\
	&= \frac{1}{2}\left[
			  \alpha_{ikj} + \alpha_{ijk}
			+ \alpha_{jik} \right. \\
\notag & \qquad \left. {}
			+ \alpha_{jki}
			- \alpha_{kij} - \alpha_{kji}
		\right] , \\
	\omega_{ikj}
	&= -\alpha_{ikj} + \frac{1}{2}\left[
			  \alpha_{ikj} + \alpha_{ijk}
			+ \alpha_{jik} \right. \\
\notag & \qquad \left. {}
			+ \alpha_{jki}
			- \alpha_{kij} - \alpha_{kji}
		\right] \\
	&= \frac{1}{2}\left[
			- \alpha_{ikj} + \alpha_{ijk}
			+ \alpha_{jik} \right. \\
\notag & \qquad \left. {}
			+ \alpha_{jki}
			- \alpha_{kij} - \alpha_{kji}
		\right] \\
	&= -3\alpha_{[ikj]} + 2\alpha_{[j|i|k]} \\
	&= -\alpha_{i[kj]} + \alpha_{j(ik)} - \alpha_{k(ij)}.
\end{align}
The alternative expressions for $\omega_{ikj}$ in terms of
$\alpha_{ikj}$ are provided for convenience. When $T=\exp(h)$, and $h$
is considered to be small, the linear-order expression for $\alpha$ in
terms of $h$ is
\begin{equation}
	\alpha_{ikj} = \del_i h^l_j \eta_{lk} + \O(h^2)
		= \del_i h_{kj} + \O(h^2) .
\end{equation}

\begin{figure}
\includegraphics{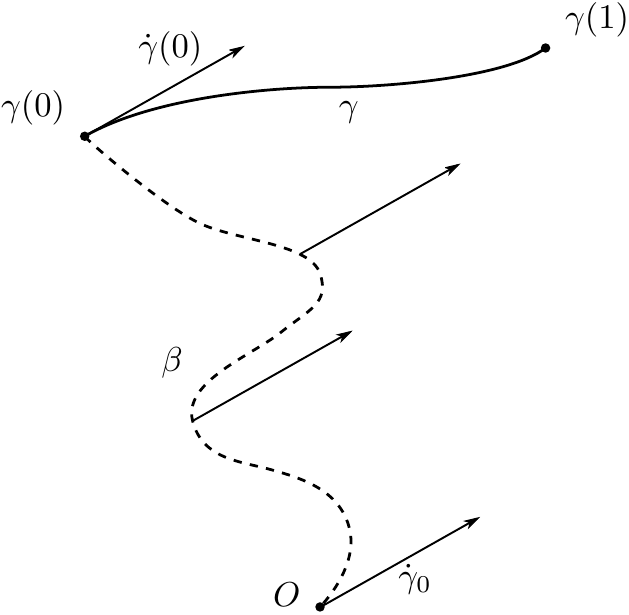}
\caption{%
A geodesic $\gamma$ is defined by its initial point $\gamma(0)$ and
initial tangent vector $\dot{\gamma}(0)$. The initial point itself is
specified as the final point $\gamma(0)=\beta(1)$ of another curve
$\beta$ which starts at the origin. The initial tangent vector can then
be specified by its inverse image $\dot{\gamma}_0\in T_OM$ under
parallel transport along $\beta$.
}
\label{geodesic-geom}
\end{figure}

The geodesic~\eqref{geoeq} and parallel transport~\eqref{pteq} equations
can be jointly transformed into a system of integral equations
\begin{align}
	\hspace{-.3em}
	\gamma(t)^i
		&= \gamma(0)^i + \int_{0}^t\d{t'}\, \bar{T}(\gamma(t'))^i_j v_0^j(t'), \\
	\hspace{-.3em}
	v^k_\alpha(t)
	&= T\exp\left[
			-\int_0^t\d{t'}v_0(t')^i\omega(\gamma(t'))\indices{_i^k_j}
		\right] v_\alpha^j(0), \\
\label{expp-def}
	&= \exp(p_\gamma(t))^k_j v_\alpha^j(0),
\end{align}
where $T\exp({\cdots})$ denotes the time-ordered exponential and the
\emph{parallel propagator} $\exp(p_\gamma(t))^k_j$ is defined implicitly
by the last equation. For brevity, we also use the notation $p_\gamma =
p_\gamma(1)$. In this form, the solution can be directly expanded to any
desired order in $\O(h)$. The solutions are parametrized by the initial
data $\gamma(0)^i$ and $v_\alpha^i(0)$, with $\dot{\gamma}^i(0) =
v_0^i(0)$.

The initial data are specified as described in Sec.\ref{tetform}.
Namely, given a curve $\beta$ starting at the origin, $\beta(0)=O$, we
have $\beta^i(1) = \gamma^i(0)$ and $v_\alpha^i(0) = \exp(p_\beta)^i_j
L^j_k v_{O,\alpha}^k$, for some vectors $v_{O,\alpha}^k \hat{e}^a_k
= L^j_k v_{O,\alpha}^k e^a_j$ in the tangent space at $O$.

Suppose that at zeroth order we are given $\beta(t) = \beta_0(t) +
\beta_1(t) + \O(h^2)$, $\gamma(t) = \gamma_0(t) + \O(h)$, and
$\exp(p)^k_j = \delta^k_j + \O(h)$. To linear order, the parallel
propagator is expanded as
\begin{align}
	\exp(p_\gamma(t))^k_j &= \delta^k_j + (H_{\gamma_0}(t))^k_j + \O(h^2), \\
	(H_{\gamma_0}(t))_{kj} &= \eta_{ki} (H_{\gamma_0}(t))^i_j \\
\label{H-apdx}
		&= -\int_0^t\d{t'}\,
			\dot{\gamma}_0^i(t') \omega(\gamma_0(t'))_{ikj}, \\
	H_{\gamma_0} &= H_{\gamma_0}(1).
\end{align}
Recall that $L=\exp(h_O)$. The tetrad components of the parallel-transported
vector $v_\alpha^a$ at $\gamma(1)$ are then
\begin{align}
	v_\alpha^i(1)
	&= \exp(p_\gamma)^i_j \exp(p_\beta)^j_k \exp(h_O)^k_l v_{O,\alpha}^l \\
\label{prlt-sol}
	&= v_{O,\alpha}^i + (h_O)^i_j v_{O,\alpha}^j
		+ (H_{\beta_0}+H_{\gamma_0})^i_j v_{O,\alpha}^j \\
\notag
	&\qquad {} + \O(h^2) .
\end{align}
For the coordinates of the geodesic curve, the linear-order solution is
[recall that $\bar{T}=\exp(-h)$ and $L=\exp(h_O)$]
\begin{widetext}
\begin{align}
	\gamma^i(1)
	&= \beta^i(1) + \int_0^1\d{t'}\,
		\exp(-h(\gamma(t')))^i_j \exp(p_\gamma(t'))^j_k
		\exp(p_\beta)^k_l \exp(h_O)^l_m v_{O,0}^m \\
	&= \beta_0^i(1) + \beta_1^i(1) + \int_0^1\d{t'}\,\left[
		- h^i_j(\gamma_0(t'))
		+ (H_{\gamma_0}(t'))^i_j
		+ (H_{\beta_0})^i_j
		+ (h_O)^i_j \right] v_{O,0}^j + \O(h^2) \\
	&= \beta_0^i(1) + \beta_1^i(1)
		- \int_0^1\d{t'} h^i_j(\gamma_0(t')) v_{O,0}^j
		- \int_0^1\d{t'} \int_0^{t'}\d{t''}
				\dot{\gamma}_0^k(t'')\omega(\gamma_0(t''))\indices{_k^i_j} v_{O,0}^j \\
\notag
	&\qquad {}
		+ (H_{\beta_0})^i_j v_{O,0}^j
		+ (h_O)^i_j v_{O,0}^j
		+ \O(h^2), \\
\label{geod-sol}
	&= \beta_0^i(1)+\beta_1^i(1)+(J_{\gamma_0,\beta_0})^i+\O(h^2),\\
	(J_{\gamma_0,\beta_0})_i &= \eta_{ij} (J_{\gamma_0,\beta_0})^j \\
\label{J-apdx}
	&=- \int_0^1\d{t'} h_{ij}(\gamma_0(t')) v_{O,0}^j
		- \int_0^1\d{t'} \int_0^{t'}\d{t''}
				\dot{\gamma}_0^k(t'')\omega(\gamma_0(t''))_{kij} v_{O,0}^j
		+ (H_{\beta_0})_{ij} v_{O,0}^j
		+ (h_O)_{ij} v_{O,0}^j .
\end{align}

For the purposes of this paper, all the zeroth-order curves $\beta_0(t)$
or $\gamma_0(t)$ are piecewise straight line segments in Minkowski
space. Given a straight line segment $X(t)$, we denote its point of origin
and end point by $x_1=X(0)$ and $x_2=X(1)$ respectively. The standard
affine parametrization is $X(t) = (1-t)x_1 + tx_2$. The segment's
standard tangent vector is denoted $x=x_2-x_1$. At zeroth order, the
$\beta$-parallel-transported image of $x^i$ at $O$ is just $v_{O,0}^i=x^i$.
It is convenient to use the following notation and identities for
integration over line segments:
\begin{align}
\label{apdx-not1}
	\int_X^{(n)} \d{t} f
	&= \int_0^1\d{t_0}\int_0^{t_0}\d{t_1}\cdots\int_0^{t_{n-1}}\d{t_n}\,
		f(X(t_n)) \\
	&= \int_0^1\d{t}\,\frac{(1-t)^n}{n!} f(X(t)) , \\
\label{apdx-not2}
	\int_X \d{t}\, f &= \int^{(0)}_X \d{t}\, f , \\
\label{apdx-not3}
	[f]_{x_1}^{x_2}
	&= \int_X \d{t}\, x^i \del_i f = f(x_2) - f(x_1) , \\
\label{apdx-not4}
	\int^{(n)}_X\d{t}\, x^i \del_i f
	&= -\frac{1}{n!} f(x_1) + \int^{(n-1)}_X\d{t}\, f .
\end{align}
For definiteness, suppose that $\gamma_0$ is a single segment $X$, whose
point of origin is the end point of $\beta_0$, a piecewise linear path
$Y$, whose segments are indexed by $N$ and denoted $Y_N$.  Then we can
write the expressions for $H_X$, $H_Y$ and $J_{X,Y}$ more concretely as
\begin{align}
	A^i B^j (H_X)_{ij}
	&= A^i B^j \int_X\d{t}\,x^k[\del_k h_{[ij]}-\del_j h_{(ki)}+\del_i h_{(kj)}]\\
\label{H-details}
	&= A^i B^j [h_{[ij]}]_{x_1}^{x_2}
		+ 2A^{[i}B^{j]} x^k \int_X\d{t}\, \del_i h_{(kj)}, \\
	(H_Y)_{ij} &= \sum_N (H_{Y_N})_{ij} ,
\end{align}
where the expression for $H_{Y_N}$ is the same, only with $X$ replaced by
$Y_N$, and
\begin{align}
	A^i (J_{X,Y})_i
	&= -A^i\int_X\d{t}\, h_{ij} x^j
		+A^i\int^{(1)}_X\d{t}\,
			x^k[\del_k h_{[ij]}-\del_j h_{(ki)}+\del_i h_{(kj)})]x^j
		+A^i(H_Y)_{ij} x^j + A^i (h_O)_{ij} x^j \\
\label{J-details}
	&= A^i x^j (h_O)_{ij} - A^i x^j h_{[ij]}(x_1)
		+ A^i x^j [h_{[ij]}]_{O}^{x_1} \\
\notag
	&\qquad {}
		- A^i x^j \int_X\d{t}\, h_{(ij)}
		+ 2 A^{[i}x^{j]} x^k \int^{(1)}_X \del_i h_{(kj)} 
		+ \sum_N 2 A^{[i}x^{j]}y_N^k \int_{Y_N}\d{t}\,\del_i h_{(kj)} \\
	&= A^i x^j [(h_O)_{ij} - h_{[ij]}(O)]
		- A^i x^j \int_X\d{t}\, h_{(ij)}
		+ 2 A^{[i}x^{j]} x^k \int^{(1)}_X\d{t}\,\del_i h_{(kj)} 
		+ \sum_N 2 A^{[i}x^{j]}y_N^k \int_{Y_N}\d{t}\,\del_i h_{(kj)} .
\end{align}

Given the above explicit formulas for $H_X$ and $J_{X,Y}$. We can consider
how they transform under gauge transformations. In the tetrad formalism,
gauge transformations are generated by arbitrary local Lorentz
transformations (tetrad rotations that do not change the metric) and
arbitrary spacetime diffeomorphisms. In the linearized tetrad formalism,
the most general gauge transformation takes the form $h_{ij} \mapsto
h_{ij} + \del_j C_i + D_{ij}$ and $(h_O)_{ij}\mapsto -\del_{[j}C_{i]} +
D_{ij}$, where $D_{ji}=-D_{ij}$. Under such a variation we have the
following identities:
\begin{align}
	\delta h_{ij} &= \del_j C_i + D_{ij}, \\
	\delta (h_O)_{ij} &= 2\del_{[j}C_{i]} + D_{ij}, \\
	A^i B^j (\delta H_X)_{ij}
	&= A^i B^j [\del_{[j}C_{i]} + D_{ij}]_{x_1}^{x_2}
		+ A^{[i}B^{j]} \int_X\d{t}\,[x^k\del_k \del_k C_j + x^k\del_i \del_j C_k]\\
	&= A^i B^j [\del_{[j}C_{i]}+D_{ij}]_{x_1}^{x_2}
		+A^{[i}B^{j]}[\del_i C_j]_{x_1}^{x_2} \\
\label{H-trans}
	&= A^i B^j [D_{ij}]_{x_1}^{x_2}, \\
	A^i (\delta J_{X,Y})_i
	&= A^i x^j[2\del_{[j}C_{i]}(O)-\del_{[j}C_{i]}(O)-B_{ij}(O)]
		- \frac{1}{2}A^i x^j \int_X\d{t}\,[\del_j C_i + \del_i C_j] \\
\notag
	&\qquad {}
		+ A^{[i}x^{j]}x^k\int^{(1)}_X\d{t}\,[\del_i\del_k C_j + \del_i\del_j C_k]
		+ \sum_N A^{[i}x^{j]}y^k_N\int_{Y_N}\d{t}\,
			[\del_i\del_k C_j + \del_i\del_j C_k] \\
	&= A^i x^j[\del_{[j}C_{i]}(O)-B_{ij}(O)]
		-\frac{1}{2}A^i [C_i]_{x_1}^{x_2}
		+A^{[i}x^{j]}[\del_i C_j]_{O}^{x_1}
		-\frac{1}{2}A^i x^j\int_X\d{t}\,\del_i C_j \\
\notag
	&\qquad {}
		- A^{[i}x^{j]}\del_i C_j(x_1)
		+ A^{[i}x^{j]}\int_X\d{t}\, \del_i C_j \\
	&= A^i x^j[\del_{[j}C_{i]}(O)-B_{ij}(O)]
		- A^{[i}x^{j]}\del_i C_j(O)
		- A^i [C_i]_{x_1}^{x^2}  \\
\label{J-trans}
	&= A^i x^j [2\del_{[j}C_{i]}(O)-B_{ij}(O)] - A^i[C_i]_{x_1}^{x_2} .
\end{align}
\end{widetext}
Suppose that we have a sequence of segments $(X)$ that starts at
$O$ and forms a closed loop, $\sum_X x^i = 0$. We can naturally form
pairs $(X,Y)$ where the segments $Y_N$ of $Y$ consist of elements of
$(X)$ that precede $X$. Then we can define
\begin{align}
\label{H-grp}
	(H_{(X)})_{ij} &= \sum_X (H_X)_{ij} , \\
\label{J-grp}
	(J_{(X)})_i &= \sum_X (J_{X,Y})_i .
\end{align}
From the above transformation properties, it is clear that $H_{(X)}$ and
$J_{(X)}$ are invariant under all
gauge transformations, $A^i B^j(\delta H_{(X)})_{ij} = 0$ and $A^i
(\delta J_{(X)})_i = 0$. The terms that depend only on the
values of $C$ and $D$ at $O$ cancel because their sum is proportional to
$\sum_X x^i$ and the remaining terms cancel because they form a cyclic
telescoping sum. These invariant quantities are actually used in
Sec.~\ref{triang-lin} in the expression for the correction to the time delay in
approximately Minkowski spacetime.

\bibliographystyle{apsrev4-1}
\bibliography{paper-delay}

\end{document}